\DeclareMathOperator*{\argmin}{arg\,min}
\newtheorem{mydef}{Definition}
\newtheorem{mythm}{Theorem}
\newtheorem{mylem}{Lemma}
\newcommand{\eg}{\textit{e.g.}~}
\newcommand{\ie}{\textit{i.e.}~}
\newcommand{\etc}{\textit{etc.}~}
\newcommand{\cf}{\textit{cf.}~}
\newcommand{\one}{({\em i}\/)}
\newcommand{\two}{({\em ii}\/)}
\newcommand{\three}{({\em iii}\/)}
\newcommand{\four}{({\em iv}\/)}
\newcommand{\five}{({\em v}\/)}
\newcounter{LINumberOfComments}
\newcounter{GTNumberOfComments}
\newcommand{\highlight}[1]{\textcolor{black}{#1}}
\newcommand{\gchange}[1]{\textcolor{black}{#1}}
\begin{document}

\title{FairCache: Introducing Fairness to ICN Caching\\
	Technical Report}

\author{Liang Wang$^{1}$ \quad Gareth Tyson$^{2}$ \quad Jussi Kangasharju$^{3}$ \quad Jon Crowcroft$^{1}$ \\
$^{1}$University of Cambridge, UK \quad $^{2}$Queen Mary University
London, UK \quad $^{3}$University of Helsinki, Finland}

\maketitle

\begin{abstract} 

Information-centric networking (ICN) is a popular research topic. At its heart is the concept of in-network caching. Various algorithms have been proposed for optimising ICN caching, many of which rely on collaborative principles, \ie multiple caches interacting to decide what to store. Past work has assumed altruistic nodes that will sacrifice their own performance for the global optimum. We argue that this assumption is flawed. We address this problem by modelling the in-network caching problem as a Nash bargaining game. We develop optimal and heuristic caching solutions that consider both performance and \emph{fairness}. We argue that only algorithms that are fair to all parties will encourage engagement and cooperation. Through extensive simulations, we show our heuristic solution, FairCache, ensures that all collaborative caches achieve performance gains without undermining the performance of others.


\end{abstract}


\begin{IEEEkeywords}
Information-centric networking, network protocols, resources allocation, game theory, algorithm design, optimisation.
\end{IEEEkeywords}

%
\IEEEpeerreviewmaketitle

\section{Introduction}
\label{sec:intro}


Information-Centric Networking (ICN)~\cite{jacobson:ccn,koponen:dona} has been proposed to exploit the observation that much of today's Internet traffic is content distribution.
ICN replaces the existing location-based Internet model with a content request/response model. One feature this enables is the capacity to cache content within the network. 
Whereas initial ICN caching approaches used traditional algorithms (\eg Least Recently Used), there has been a number of novel proposals that attempt to specifically target ICN environments. These algorithms exploit things like inter-AS cooperation, request prediction and a priori topology maps to optimise performance~\cite{Pacifici:2011:scr, 5062201, 6195634}.

A key outcome of this work has been the observation that collaborative caching usually outperforms locally optimised algorithms~\cite{Dahlin:1994:CCU:1267638.1267657, Chai:2012:CLM, 6566743, 6195634}. 
This is primarily caused by the nature of ubiquitous ICN caching, where nearby caches will often wastefully store the same objects~\cite{Chai:2012:CLM}. To address this, a simple collaborative algorithm might involve two nodes strategically caching distinct objects~\cite{wong:globecom2012, 5062201, Psaras:2012:PIC:2342488.2342501}. 
\gchange{
Cache collaboration is therefore likely to play a role in any future ICN deployments~\cite{wang:thesis:2015}. 
In tandem, we are witnessing a fragmentation of cache ownership in the Internet, with large operators deploying separate infrastructures. 
Some of these providers exclusively host their own content (\eg Google, Netflix), whilst other aggregate content from multiple sources (\eg Akamai, ChinaCache).
This adds an extra layer of complexity as its means that, even on an intra-domain level, we may begin to see multiple competing stakeholders operating caches within a single network. 
This will likely be accelerated by the growth of network function virtualisation, which will allow anybody to ``spin up'' caches within a network (we already see the availability of virtual cache services, \eg Fortinet Virtual Cache). Hence, ensuring the cache collaboration can also work in this setting will become increasingly important.}

A more extreme example of this fragmentation is within the expanding number of wireless community mesh networks, \eg Guifi~\cite{6379139}; these are deployed by groups of individuals who each contribute wireless routers (\eg mounted on their property). In a community network, \emph{every} router/cache is operated by a separate individual. Hence, we predict that future ICNs will use caches that are provisioned not just by network operators, but also various distinct stakeholders at strategic locations. These observations, however, have the potential to undermine the key tenets of caching in ICNs: What if caches operated by separate entities pursue policies that do not include collaboration, the storage of competitor's content or the serving of specific users? This is currently the situation online today, and it is unlikely to change with the advent of ICN. 
\gchange{Despite this, most ICN collaborative algorithm assumes altruistic nodes that simply strive to reach a global optimum~\cite{Psaras:2012:PIC:2342488.2342501, Chai:2012:CLM, 5062201, 6195634, borst:DistributedCaching_INFOCOM2010, 6566743, 6739053, Wang201548}. Whereas this is acceptabe in scenarios where a single organisation operates all caches, it ceases to be suitable in inter-domain scenarios or cases where caches in a single network are operated by multiple (third party) providers.}

The reasons why a non-collaborative policy may be implemented are diverse. However, in this paper, we explore the topic from a utilitarian perspective. Intuitively, caches would wish to engage in a collaborative algorithm if they attain greater utility than if they were not to engage. This observation mandates some concept of \emph{fairness}, where benefits are spread fairly across caches, and individuals are not expected to sacrifice personal performance by collaborating. Imagine, for instance, the above community network example; an individual who sees his/her own performance consistently detrimented by collaboration would (rationally) cease. We therefore argue that collaboration should be based on fairness, which may or may not reduce global performance. While a global optimum sounds attractive, we argue it is more important, from a practical perspective, that every node is better off by collaborating together than working alone.
In this paper, we design a collaborative caching algorithm that embraces both high performance and fairness. Our focus is not to build a protocol that forces nodes to collaborate, or provides protection against malicious behaviour but, rather, to design underlying algorithms that can fairly share cache space across trusted collaborators. We first formulate the fair in-network caching problem as a Nash bargaining game (\S\ref{sec:bargain}) before describing optimal algorithms for allocating objects to caches (\S\ref{sec:sol}). We then propose a heuristic collaborative caching algorithm (\S\ref{sec:fin}) with fairness at its core: FairCache. Through extensive simulations, we show that FairCache achieves in excess of 90\% accuracy compared to the optimal solution, at a fraction of the overheads (\S\ref{sec:exp}). Importantly, we show that, when using FairCache, \emph{all} nodes improve their performance via cooperation. It can be deployed across small subsets of collaborating caches or, alternatively, globally without change to design. We conclude by extracting key lessons learnt (\S\ref{sec:conclusion}).

\section{Motivation and System Model}
\label{sec:model}

To underpin our design, we begin with a motivational example before outlining our system model. For convenience, Table \ref{tab:notation} contains the notations used throughout the paper.


\subsection{Motivational Example}
\label{sec:example}


We use the simple toy caching system described in Figure~\ref{fig:example} as a motivating example. Imagine two routers with a cache capacity of one object. They each serve a nearby set of users and, consequently, it is desirable that they collaborate to decide which objects should be cached (\eg to avoid storing the same object). To decide which object to store, the caches locally inspect the request rates they receive, depicted in Figure~\ref{fig:example} (as a Demand Matrix). Intuitively, each cache would wish to selfishly optimise some concept of individual ``utility''. For simplicity, we measure their utility as the number of cache hits they get. We also allow nodes to redirect requests to the other cache; if a hit is attained there, a utility of 0.5 is given to the node performing the redirect (factored down due to the extra delay, overhead etc.). We consider three caching strategies:

\textit{Case 1: Greedy Strategy}, where each cache locally and selfishly optimises its performance. \highlight{As our comparison baseline, Greedy strategy is a perfect LFU that keeps track of all the objects.} Cache $1$ chooses to hold $A$ since it is the most popular content of demand $90$, which leads to $U_1 = 90 + \frac{31}{2} = 105.5$. Similarly, Cache $2$ caches $B$ which leads to $U_2 = 83 + \frac{5}{2} = 85.5$. Therefore, we have the aggregated utility $U_{Total} = U_1 + U_2 = 105.5 + 85.5 = 191$.

\textit{Case 2: Global Strategy}, where each cache tries to maximise the aggregated utility $U_{Total}$ of the whole system. By caching $C$ and $D$ on Cache $1$ and $2$ respectively, $U_{Total}$ reaches its theoretical maximum, namely $U_{Total} = 126 + 85 = 211$. However, if we examine the individual performance and compare them to the Greedy Strategy, we notice that the increase in utility for Cache $1$ results in a utility decrease for Cache $2$.

\textit{Case 3: Fair Strategy}, where caches attempt to collaborate fairly, in a way that does not reduce utility for any party. Cache $1$ stores $E$ and Cache $2$ stores $F$. Although this does not achieve the global optimum (\ie the aggregated utility $U_{Total}$ drops from $211$ to $208.5$), it ensures that \emph{both} caches improve their respective performance whilst also improving upon the local Greedy Strategy. This solution is Pareto efficient, and ensures both parties are incentivised. 

\begin{figure}[!tb]
  \centering
  \includegraphics[width=8.8cm]{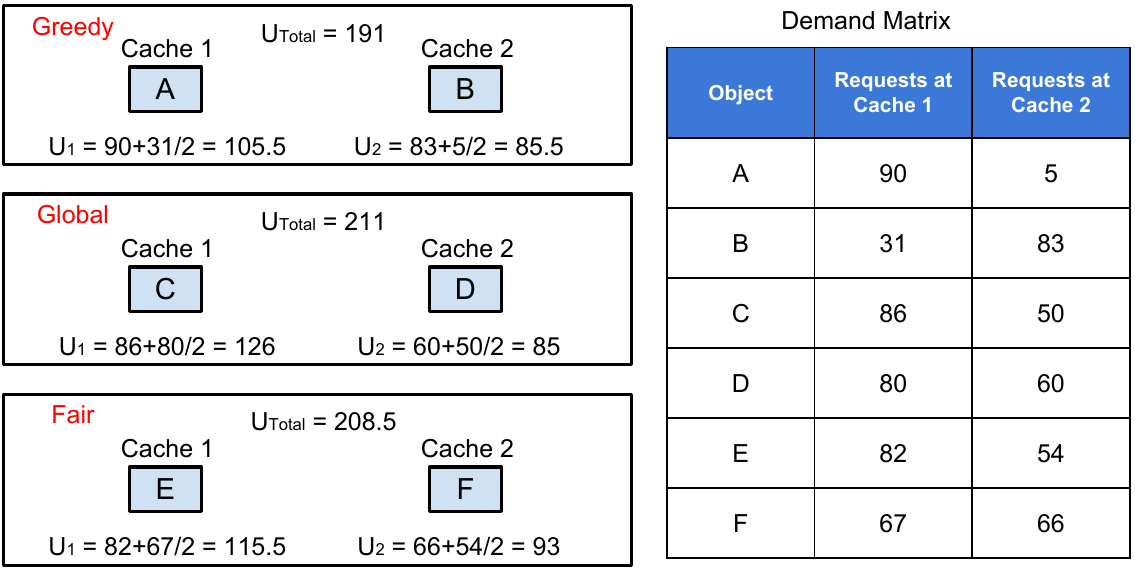}
    \caption{A mini caching system with two caches and six objects. Three strategies (Greedy, Global and Fair) are presented.}
  \label{fig:example}
\end{figure}

The above reveals a stark mismatch. Attaining a global optimum often disadvantages some parties~\cite{bertsimas2011price}. Thus, nodes that are unfairly exploited by other caches' redirects (at the cost of their own performance) are unlikely to continue collaboration: Caching should balance the need for high performance against the need for fair usage across caches. 



\begin{table}[!tb]
  \centering
  \begin{tabular}{  l  l  }
  \multicolumn{2}{c}{} \\
    \hline
    Notation                 	& Meaning \\
    \hline
    $G = (V, E)$           	& a graph G of node set $V$ and edge set $E$   \\
    $O$                           & content set, $o_k$ represents the $k^{th}$ item   \\
    $O'$        					& a reduced content set of $O$ by removing unpopular items  \\
    $\mathbf{s}$         		& $s_k$ is the size of the $k^{th}$ content item $o_k$  \\
    $\mathbf{w}$         	& demand matrix, $w_{i,k}$ is the demand of $o_k$ on $v_i$   \\
    $\mathbf{U}$         	& utility vector, $U_i$ is the utility of node $v_i$   \\
    $\mathbf{u^0}$    		& initial disagreement vector, $u^0_i$ is the disagreement value of $v_i$   \\
    $\mathbf{x}$          	& $x_{i,k}$ represents the decision whether $v_i$ caches $o_k$ locally  \\
    $\mathbf{y}$            	& $y_{i,j,k}$ represents the decision whether $v_i$ retrieves $o_k$ from $v_j$   \\
    $\mathbf{\Omega}$  	& solution space of all caching games,  $\mathbf{\Omega^e}$ is the Pareto frontier  \\
    $\mathbf{\Psi}$			& solution space of all fair caching games   \\
    $\boldsymbol{\lambda}$	& dual variable associated with the constraint (\ref{eq:fsb}): $y_{i,j,k} \leq x_{j,k}$   \\
    $\mathcal{L}(\cdot)$			&  Lagrangian associated with the objective function (\ref{eq:nash:max})   \\
    $d(\cdot)$     				& Lagrangian dual function of the objective function (\ref{eq:nash:max})    \\
    $h(\cdot)$ 					& subgradient of the dual function $d(\cdot)$  \\
    $\mathbf{\Phi}$			& overall communication complexity at system level   \\
    $N_i$                       	& neighbourhood of $v_i$, average size is denoted as $|\overline{N}|$   \\
    $N^+_i$        				& set of nodes having $v_i$ in their neighbourhoods   \\
    $r$                       		& average search radius, $r_i$ uniquely defines $N_i$ of $v_i$  \\
    $n_1$, $n_2$				& average size of one-hop and two-hop neighbourhood   \\
    $t$                      		& current iteration index while running subgradient descent   \\
    $\xi$                       	& $\xi_t$ is the step size of a subgradient algorithm at iteration $t$   \\
    $\theta'$, $\theta$ 	& current utility improvement and the stopping threshold   \\
    \hline
  \end{tabular}
  \caption{Table of main notation used in the paper}
  \label{tab:notation}
\end{table}

\subsection{System Model}

We model the network as a graph, $G = (V,E)$, where $V$ is the set of nodes and $E$ is the set of edges. $V$ could consist of all caches in a network or, alternatively, a subset of collaborating partners. These could be owned by one or more separate organisations. We follow an NDN~\cite{jacobson:ccn} model, whereby hosts generate requests that get deterministically routed to sources that reply with content objects. Each node in the network, $v_i \in V$, is equipped with cache of size $C_i$. We denote $O$ as the global set of content objects. For each $o_k \in O$, we associate two parameters: $s_k$, which is the object size and $w_{i,k}$, which is its aggregated demand (requests per second) observed from all the clients connected to $v_i$.
We focus on a subcategory of caching: collaborative algorithms. Because of resource constraints, we assume that nodes are limited in the number of nearby nodes they can cooperate with. We use $r_i$ to represent $v_i$'s search radius measured in hops. $r_i$ defines a neighbourhood for each $v_i$, which we denote as $N_i = \{v_j | l_{i,j}^* \leq r_i, \forall v_j \in V, v_i \neq v_j\}$, where $l_{i,j}^*$ measures the length of the shortest path between $v_i$ and $v_j$.

\gchange{A collaborative caching algorithm can be decomposed into ``caching decisions'' and ``retrieval decisions''. These two parts solve ``what to cache'' and ``where to fetch''. The latter is necessary to allow nodes to redirect requests to other caches (opposed to forwarding it along the default route to the original source). 
This means that caches that do not locally store an object retain the flexibility to exploit objects stored elsewhere (\ie collaboration).
To model such a caching strategy, we use two vector decision variables: $\mathbf{x}$ and $\mathbf{y}$. $x_{i,k} \in \{0,1\}$ denotes whether $v_i$ caches $o_k$, and $y_{i,j,k} \in \{0,1\}, \forall i \neq j$ denotes whether $v_i$ retrieves the object $o_k$ from $v_j$.} Formally, we say:


\begin{mydef}
  A caching strategy for a network $G$ is a tuple of functions
  $(\mathbf{x}, \mathbf{y})$ where $\mathbf{x}: V \times O \rightarrow
  \{0,1\}$ and $\mathbf{y}: V \times V \times O \rightarrow \{0,1\}$. The
  family of all such tuples is denoted as $\Psi$, which represents the
  whole space of all caching strategies.
\end{mydef}

\begin{mydef}
  A caching strategy for a node $v_i$ is defined as $(\mathbf{x_i},
  \mathbf{y_i})$, where $\mathbf{x_i}: \{v_i\} \times O \rightarrow
  \{0,1\}$ and $\mathbf{y_i}: \{v_i\} \times V \times O \rightarrow
  \{0,1\}$ are the partial functions of $\mathbf{x}$ and $\mathbf{y}$
  with domains restricted to $\{v_i\} \times O$ and $\{v_i\} \times V
  \times O$ respectively.
\end{mydef}

Note that $\times$ above represents the Cartesian product. We strive for a caching strategy that is \one~Pareto efficient; \two~has well-defined fairness achieved amongst the nodes; and \three~attains high performance. From a utilitarian perspective, this combination of attributes will lead to stable cooperation.




\section{Fundamentals of Bargaining Games}
\label{sec:bargain}

A bargaining game is a model for analysing how parties collaborate to obtain certain utility values. We model collaborative caching as a bargaining game, in which we aim to achieve both high performance (utility) and fairness. Ideally, a solution is considered fair if it satisfies certain axioms\cite{nash1950bargaining, RePEc:ecm:emetrp:v:43:y:1975:i:3:p:513-18}:  \one~Pareto optimality; \two~Scale invariance; \three~Symmetry; \four~Independence of the irrelevant alternatives; \five~Monotonicity. Nash proved that there is one unique solution which satisfies axiom \one-\four, termed the Nash Bargaining Solution (NBS)~\cite{nash1950bargaining}. The NBS can be extended to multiple players. On the other hand, the Kalai-Smorodinsky Solution (KS)~\cite{RePEc:ecm:emetrp:v:43:y:1975:i:3:p:513-18} satisfies axiom \one-\three and \five. These two solutions lead to two fairness metrics. Compared to NBS, KS often does not have a closed-form expression. Hence, we focus on NBS.


\subsection{Game Definitions}
\label{sec:bargain:nbs}

In game theory, each node attempts to optimise its personal ``utility''. In caching, utility for a node, $U_i$, can be measured by the delay to respond to a client. Each cache aims to serve its clients with the lowest possible delay. Consequently, serving a request from the local cache produces the greatest utility, but redirecting a request to another nearby neighbour also increases it (rather than forwarding to the original source). As such, a selfish cache strives to maximise its utility through a combination of local caching and redirects to nearby neighbours.\footnote{Here, we assume that each individual node selfishly optimises. However, our model can also support collective self interest amongst multiple nodes, \eg if several caches are owned by a single organisation.} Of course, if utility can be maximised solely through local caching then a node will cease to collaborate. NBS is an axiomatic solution for solving the following problem:
%
%
%
\begin{align}
  & \max \prod_{v_i \in V} (U_i - u^0_i) \label{eq:nash:1}
\end{align}
Eq.~(\ref{eq:nash:1}) is called the Nash product. As mentioned, $U_i$ is node $i$'s utility. $u_i^0$ is the initial disagreement value of $i$. The disagreement value is defined as the worst utility payoff a node would accept for collaboration. 
In practice, a node sets its disagreement value to the maximum value achieved by optimising locally as a standalone cache, \eg using Least Recent Used. In the following, we give the formal definition of our in-network caching game and its solution.

\begin{mydef}
An in-network caching game is a tuple $(\mathbf{\Omega},\mathbf{u^0})$, where $\mathbf{\Omega} \subset \mathbb{R}^{|V|}$ contains all the utility values obtainable via collaboration, and $\mathbf{u^0} \subset \mathbb{R}^{|V|}$ contains all the disagreement values leading to a negotiation breakdown. 
\end{mydef}

Let $\mathbf{\Omega^{e}} \subset \mathbf{\Omega}$ be the Pareto frontier of set $\mathbf{\Omega}$, \ie the potential Pareto efficient solutions. We assume that $\mathbf{\Omega^{e}}$ is also a concave function with a closed compact convex domain. A game is considered fair iff its outcome is fair. Therefore, we have:

\begin{mydef}
A fair caching game is a game $(\mathbf{\Omega}, \mathbf{u^0})$ with a Nash bargaining solution, \ie a function $f: \mathbf{\Omega^e} \rightarrow \mathbf{\Psi}$ such that $f(\mathbf{\Omega}, \mathbf{u^0}) = (\mathbf{x},\mathbf{y})$ uniquely maximises $\prod_{v_i \in V} (U_i - u^0_i)$.
\end{mydef}


By taking the logarithm of the objective function (\ref{eq:nash:1}), we have $\ln(\max \prod_{v_i \in V} (U_i - u_i^0)) = \max \ln(\prod_{v_i \in V} (U_i - u_i^0))$. 
By taking the negation, NBS can be obtained equivalently by:
\begin{align}
  & \max \sum_{v_i \in V} \ln (U_i - u_i^0) \Longrightarrow \min \sum_{v_i \in V} -\ln(U_i - u_i^0). \label{eq:nash:9}
\end{align}

\subsection{Fairness Definitions}
\label{sec:bargain:fairness}

We argue that collaboration should follow the intuitive concept of fairness, such that all caches receive fair utility improvements through collaboration. This is critical to ensure that node owners do not feel exploited and do not disengage from the collaboration. Being Pareto efficient, alone, does not achieve this. To attain fairness, it is necessary to formalise the concept. Three well-defined fairness metrics are often referred to in the literature~\cite{kelly1998rate, boche2009nash, muthoo1999bargaining}, \ie \textit{Egalitarian (EF)}, \textit{Max-min (MF)} and \textit{Proportional (PF)} fairness. \textit{EF} pursues an equal amount of improvement on every node, which usually creates Pareto inefficiency (and is thus seldom used in practice). Both \textit{MF} and \textit{PF} have axiomatic foundations and are widely used, \eg in traffic engineering. \textit{MF} is a generalisation of KS, while \textit{PF} is a generalisation of NBS. Thus, we only focus on \textit{PF}:

\begin{mydef}
\label{def:propfair}
Proportional Fairness (PF): A caching strategy $(\mathbf{x}^*,\mathbf{y}^*)$ is PF iff $\forall (\mathbf{x},\mathbf{y}) \neq (\mathbf{x}^*,\mathbf{y}^*) \Rightarrow \sum_{v_i \in V} \frac{U_{i} - U^*_i}{U^*_i - u^0_i} < 0$.
\end{mydef}

A cache allocation is considered \textit{PF} if the re-allocation of any object would decrease the proportional utility gain (from collaboration) of a node by \emph{less} than the respective aggregated increase for others. For example, imagine an object is re-allocated from Cache 1 to Cache 2. It would not be fair if this re-allocation reduces Cache 1's utility by 20\%, so that Cache 2 could increase its utility by just 3\%. However, it would be considered fair if Cache 2 could increase its utility by 60\%. Importantly, to be considered \textit{PF} (and to incentivise uptake), it is necessary for \emph{all} caches to improve their performance over local optimisation (\eg Least Recently Used). Otherwise collaboration would immediately breakdown in favour of local algorithms.
\gchange{If this were to occur, each node would simply select its own preferred algorithm (\eg LRU). However, in our caching games, \textit{PF} is guaranteed by NBS (the proof is trivial and available in~\cite{ourtechreport}).
It is also trivial to show whenever a Pareto efficient solution achieves EF, it also achieves MF, \ie a fair solution achieves all three fairness metrics in NBS given it is EF. We later show that being collaborative brings benefits to almost all nodes, indicating that the number of (rational) nodes who would revert to a local algorithm are very limited.}

\section{Solving a Fair In-Network Caching Game}
\label{sec:sol}

We next devise both centralised and decentralised optimal solutions for achieving fair caching. We later use these to evaluate our heuristic solution, \textit{FairCache}. 
We avoid presenting all the standard mathematical  details but rather focus on the key mechanisms (remaining are available in~\cite{ourtechreport}).

\subsection{Defining a Utility Function}

In this paper, we assume that a cache's utility is generated from serving its users' demand with low delay. For edge nodes, this demand comes directly from clients, whereas for core nodes this comes from their downstream customers. In either case, utility could be improved by a router using its local cache, or by redirecting a request to a nearby collaborative cache. Both improve delay compared to following the deterministic route to the origin. More precisely, $v_i$'s utility is defined as:
\begin{align}
  & U_i = \sum_{o_k \in O} s_k w_{i,k} x_{i,k} + \sum_{o_k \in O} \sum_{v_j \in N_i} \frac{s_k w_{i,k}}{l_{i,j}^* + 1} y_{i,j,k} \label{eq:util}
\end{align}
Both terms show that the utility is a non-decreasing function of demand and content size. The second term shows that the utility of retrieving remote content decreases as the distance increases. It indicates that a node prefers fetching from the closest source to reduce latency and traffic footprint. Although this affine utility function is used throughout the paper, any other metric (\eg bandwidth) or affine function can be used to model the utility without change to our model.


\subsection{Centralised Solution}
\label{sec:sol:central}

We begin by outlining the optimal solution, which can be computed centrally (\eg on a controller). Without loss of generality, we assume unit object size $s_k = 1$,\footnote{In practice, object size could either be varied per-object or, alternatively, objects can be separated into smaller fixed size units} also let $l_{i,j} \triangleq l_{i,j}^* + 1$ for simplicity of expression. Plugging in Eq.~(\ref{eq:util}) and Eq.~(\ref{eq:nash:9}), we define the optimisation problem as:
\begin{align}
	\min \sum_{v_i \in V} -\ln ( \sum_{o_k \in O} w_{i,k} x_{i,k} + \sum_{o_k \in O} \sum_{v_j \in N_i} \frac{w_{i,k}}{l_{i,j}} y_{i,j,k} - u_i^0 ). \label{eq:nash:max}
\end{align}
Subject to
\begin{align}
  & \sum_{o_k \in O} x_{i,k} \leq C_i, \quad \forall v_i \in V \label{eq:cache} \\
  & \sum_{v_j \in N_i} y_{i,j,k} \leq 1, \quad \forall v_i \in V, \forall o_k \in O \label{eq:dst} \\
  & y_{i,j,k} \leq x_{j,k}, \quad \forall v_i, v_j \in V, o_k \in O \label{eq:fsb} \\
  & x_{i,k} \in \{0,1\}, \quad \forall v_i \in V, o_k \in O \label{eq:int2} \\
  & y_{i,j,k} \in \{0,1\}, \quad \forall v_i, v_j \in V, o_k \in O \label{eq:int1}
\end{align}
Constraint (\ref{eq:cache}) means the content stored at a node cannot exceed its cache capacity. 
\highlight{Constraint (\ref{eq:dst}) simplifies the data scheduling and avoids requesting redundant content by constraining a node to retrieve a maximum of one complete object in a cache period.}
Constraint (\ref{eq:fsb}) says $v_i$ can retrieve $o_k$ from $v_j$ only if $v_j$ has cached it; it also says $v_i$ cannot get more than $v_j$ can offer. Constraints (\ref{eq:int2}) and (\ref{eq:int1}) impose the domain of decision variables. 

The above optimisation problem is a typical \textit{Integer Programming} program which is NP-Complete. By applying \textit{Linear Programming relaxation}, we relax constraints (\ref{eq:int2}) and (\ref{eq:int1}) by letting  $x_{i,k} \in [0,1]$ and $y_{i,j,k} \in [0,1]$.
We later round up/down $x_{i,k}$ and $y_{i,j,k}$ to construct caching strategies.
Such relaxation renders a suboptimal solution hence is considered as the lower bound of the actual performance.
Regarding Eq.~(\ref{eq:nash:max}), since all the affine functions are log-concave, their composite with logarithmic functions preserves concavity. Thus, the problem (\ref{eq:nash:max}) becomes a convex optimisation problem defined over a set of compact and convex constraints as Lemma \ref{thm:0} shows, which leads to the unique Pareto efficient solution, which is trivially followed by the existence of the equilibrium in NBS by definition\cite{nash1950bargaining}. 

\begin{mylem}
\label{thm:0}
  The problem (\ref{eq:nash:max}) is a convex optimisation problem.
\end{mylem}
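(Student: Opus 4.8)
The plan is to establish the two ingredients that together characterize a convex program posed as a maximization---a convex feasible set and a concave objective---and then combine them. First I would verify that the feasible region is convex. Every one of the constraints~\eqref{eq:cache}--\eqref{eq:int1} is an affine inequality in the decision variables $\mathbf{x}$ and $\mathbf{y}$: the capacity constraint~\eqref{eq:cache} and the single-source constraint~\eqref{eq:dst} are linear sums bounded above by constants, the feasibility coupling~\eqref{eq:fsb} rewrites as $y_{i,j,k} - x_{j,k} \leq 0$, and~\eqref{eq:int2}--\eqref{eq:int1} are box constraints. Each affine inequality defines a half-space, and the feasible set is their finite intersection, hence a convex polytope. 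This part is routine.

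The substantive step, and the one the text flags as requiring caution, is the concavity of the objective in~\eqref{eq:nash:max}. The key observation is that for each $v_i$ the argument of the logarithm,
\[
g_i(\mathbf{x},\mathbf{y}) = \sum_{o_k \in O} w_{i,k} x_{i,k} + \sum_{o_k \in O}\sum_{v_j \in N_i} \frac{w_{i,k}}{l_{i,j}}\, y_{i,j,k} - u_i^0,
\]
is an affine function of the decision variables, being a weighted sum of the variables plus the constant $-u_i^0$. I would then invoke the standard composition rule: the composition of a concave function with an affine map is concave. Since $\ln$ is concave on $(0,\infty)$ and $g_i$ is affine, each term $\ln(g_i)$ is concave on the region where $g_i > 0$, and this effective domain is itself convex, being the affine preimage of an open half-line. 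Finally, a sum of concave functions is concave, so $\sum_{v_i \in V} \ln(g_i)$ is concave over the feasible set.

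Combining the two parts---maximizing a concave objective over a convex polytope, equivalently minimizing the convex function $-\sum_{v_i\in V}\ln(g_i)$ over that polytope---shows that~\eqref{eq:nash:max} is a convex optimization problem. I expect the only real obstacle to be making the concavity step airtight: the reason $\ln$ composed with the utility survives as concave is \emph{precisely} the affinity of $g_i$, since for a nonlinear inner utility concavity could easily fail, which is exactly the pitfall the paper warns about. The domain caveat $g_i > 0$ is handled by the meaning of $u_i^0$ as the disagreement value below which no node collaborates, so that $U_i - u_i^0 \geq 0$ on any relevant solution and strict positivity holds wherever collaboration yields a genuine benefit.
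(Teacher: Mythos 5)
Your proposal is correct and follows essentially the same route as the paper's proof: the argument of each logarithm is affine (and positive on the relevant domain), so each $\ln$-term is concave by the concave-composed-with-affine rule (the paper phrases this as ``affine functions are log-concave''), sums of concave functions are concave, and all constraints are affine, hence the problem is convex. Your treatment is somewhat more explicit than the paper's---notably in spelling out the domain caveat $U_i - u_i^0 > 0$ via the meaning of the disagreement value---but this is a refinement of detail, not a different method.
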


The centralised solution can be derived by applying standard convex optimisation techniques (see \cite{ourtechreport}). The solver needs the demand matrix of each cache, cache size, content object set, whole network topology \etc as inputs. The whole equation system has $3|O| \cdot |V|^2 + 2|O| \cdot |V| + |V|$ variables and the same number of equations.

\begin{mythm}
\label{eq:thm:kkt}
  In a fair collaborative game, for the optimal caching strategy $(\mathbf{x^*_i}, \mathbf{y^*_i})$ of node $v_i$, there exist non-negative vectors $\boldsymbol{\alpha} \succeq 0$, $\boldsymbol{\beta} \succeq 0$, $\boldsymbol{\gamma} \succeq 0$, $\boldsymbol{\delta} \succeq 0$ and $\boldsymbol{\lambda} \succeq 0$, such that 
\begin{align}
  & x^*_{i,k} = \frac{1}{\alpha_i + \gamma_{i,k} - \sum_{v_j \in N^+_i} \lambda_{j,i,k}} - \frac{\tau_{i,k}}{w_{i,k}}  \label{eq:alc:1} \\
  & y^*_{i,j,k} = \frac{1}{\lambda_{i,j,k} + \beta_{i,k} - \delta_{i,k}} - \frac{l_{i,j} \tau_{i,k}'}{w_{i,j}}  \label{eq:alc:2}
\end{align}
where $\tau_{i,k} = U_i - u_i^0 - w_{i,k} x_{i,k}$ and $\tau_{i,k}' = U_i - u_i^0 - \frac{w_{i,k}}{l_{i,j}} y_{i,j,k}$.
\end{mythm}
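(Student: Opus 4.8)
The plan is to treat Theorem \ref{eq:thm:kkt} as a direct application of the Karush--Kuhn--Tucker (KKT) conditions to the program \eqref{eq:nash:max}--\eqref{eq:int1}. By Lemma \ref{thm:0} the problem is convex, its objective is differentiable wherever $U_i - u_i^0 > 0$ (which holds at any point a rational node would accept, since a node only collaborates when $U_i > u_i^0$, so the gradient $\frac{w_{i,k}}{U_i-u_i^0}$ stays finite), and every constraint is affine. Affine constraints together with a nonempty feasible interior secure Slater's condition, so strong duality holds and the KKT system is both necessary and sufficient for global optimality. It therefore suffices to write down the stationarity equations and solve them in closed form for $x^*_{i,k}$ and $y^*_{i,j,k}$.

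First I would form the Lagrangian, attaching a non-negative multiplier to each inequality: $\alpha_i$ to the capacity bound \eqref{eq:cache}, $\beta_{i,k}$ to \eqref{eq:dst}, $\lambda_{i,j,k}$ to the feasibility coupling \eqref{eq:fsb}, and $\gamma_{i,k}$, $\delta_{i,k}$ to the relevant box constraints in \eqref{eq:int2} and \eqref{eq:int1}. The essential bookkeeping step, and the place where care is genuinely needed, is to track exactly which constraints each decision variable enters. The caching variable $x_{i,k}$ affects the objective only through the single term $\ln(U_i - u_i^0)$ (no other node's utility depends on it), and it appears on the right-hand side of \eqref{eq:fsb} in every constraint of the form $y_{j,i,k} \le x_{i,k}$; the set of such $j$ is precisely $\{v_j : v_i \in N_j\} = N^+_i$. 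This is exactly why the multiplier sum in \eqref{eq:alc:1} ranges over $N^+_i$ rather than $N_i$, and getting this index set right is the main subtlety of the argument.

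Differentiating and using $\frac{\partial}{\partial x_{i,k}}\ln(U_i - u_i^0) = \frac{w_{i,k}}{U_i - u_i^0}$, stationarity in $x_{i,k}$ gives
\begin{align}
  \frac{w_{i,k}}{U_i - u_i^0} = \alpha_i + \gamma_{i,k} - \sum_{v_j \in N^+_i} \lambda_{j,i,k}. \nonumber
\end{align}
The closing move is the algebraic trick of isolating the contribution of the variable itself: writing $U_i - u_i^0 = \tau_{i,k} + w_{i,k} x_{i,k}$ with $\tau_{i,k} = U_i - u_i^0 - w_{i,k} x_{i,k}$, inverting the relation above, and solving the resulting scalar equation for $x_{i,k}$ yields \eqref{eq:alc:1} immediately. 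I would then repeat the identical computation for $y_{i,j,k}$, where $\frac{\partial}{\partial y_{i,j,k}}\ln(U_i - u_i^0) = \frac{w_{i,k}/l_{i,j}}{U_i - u_i^0}$ and the variable enters \eqref{eq:dst} (multiplier $\beta_{i,k}$), the left side of \eqref{eq:fsb} (multiplier $\lambda_{i,j,k}$), and its box constraint (multiplier $\delta_{i,k}$); the analogous split $U_i - u_i^0 = \tau'_{i,k} + \frac{w_{i,k}}{l_{i,j}} y_{i,j,k}$ produces \eqref{eq:alc:2}. Non-negativity of all five multiplier vectors is inherited directly from KKT dual feasibility, which completes the statement. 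The only step that is more than mechanical is the neighborhood bookkeeping for the $N^+_i$ coupling; the rest is routine inversion of the stationarity identities.
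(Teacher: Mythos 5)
Your proposal is correct and follows essentially the same route as the paper's own proof: form the Lagrangian with multipliers $\boldsymbol{\alpha}, \boldsymbol{\beta}, \boldsymbol{\gamma}, \boldsymbol{\delta}, \boldsymbol{\lambda}$ on the respective constraints, invoke necessity and sufficiency of KKT for this convex program with affine constraints, write the stationarity condition $\frac{w_{i,k}}{U_i - u_i^0} = \alpha_i + \gamma_{i,k} - \sum_{v_j \in N^+_i} \lambda_{j,i,k}$ (correctly identifying that $x_{i,k}$ couples into constraint \eqref{eq:fsb} through the set $N^+_i$), and invert using the split $U_i - u_i^0 = \tau_{i,k} + w_{i,k}x_{i,k}$, with the symmetric computation for $y_{i,j,k}$. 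The only cosmetic difference is that the paper additionally records the complementary slackness system (its eq.~\eqref{eq:kkt:sys}), which is used later in the text but is not needed for the two displayed formulas themselves.
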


The centralised solution can be easily derived by applying standard convex optimisation techniques, as its closed form expression shown in Theorem \ref{eq:thm:kkt}. The proof and actual equation system can be found in appendix. In practice, we can choose any modern LP solver to calculate the solution, but the computation of solving such a large system is non-trivial.

\subsection{Distributed Solution}
\label{sec:sol:distributed}

The optimal centralised solution has obvious drawbacks in its actual use: \one~it suffers from high computation complexity; \two~it creates a single point of failure; and \three~it is not adaptive under network dynamics. Hence, we next translate it into a distributed solution using decomposition techniques.

To solve an equation system, each node can be viewed as a subsystem. If they simply optimise locally, all the calculations in each subsystem are independent from those in others. However, due to collaboration, there are variables and constraints, which are referred to as \textit{complicating variables and constraints}~\cite{boyd2004convex}. These make calculations interdependent and couple a subsystem with others.
In problem (\ref{eq:nash:max}), the only complicating constraint is (\ref{eq:fsb}). 

To decompose Eq.~(\ref{eq:nash:max}), we apply Lagrangian dual relaxation. Lagrangian dual relaxation provides a non-trivial lower-bound of a primal. The difference between the dual and the primal is called the \textit{duality gap}, which can be zero if certain conditions are met as we show below. The Lagrangian $\mathcal{L}(\cdot): \mathbb{R}^{2|O||V|^2} \rightarrow \mathbb{R}$ associated with objective (\ref{eq:nash:max}) is defined as follows:
\begin{align}
  & \mathcal{L}(\mathbf{x},\mathbf{y},\boldsymbol{\lambda}) \label{eq:dual:1} \\
  \nonumber  & = \sum_{v_i \in V} [ - \ln(U_i - u_i^0) + \sum_{v_j \in N_i} \sum_{o_k \in O} \lambda_{i,j,k}(y_{i,j,k} - x_{j,k}) ] .
\end{align}
$\boldsymbol{\lambda} \succeq 0$ is the dual variable associated with constraint (\ref{eq:fsb}) of objective function (\ref{eq:nash:max}). Then the Lagrangian dual function $d(\cdot): \mathbb{R}^{|O||V|^2} \rightarrow \mathbb{R}$ is as follows:
\begin{align}
  & d(\boldsymbol{\lambda}) = \inf_{\mathbf{x} \in X, \mathbf{y} \in Y} \mathcal{L}(\mathbf{x},\mathbf{y},\boldsymbol{\lambda}). \label{eq:dual:2}
\end{align}
Given $\boldsymbol{\lambda}$, let $\mathbf{x^*}$ and $\mathbf{y^*}$ be the unique minimizers for the Lagrangian (\ref{eq:dual:1}) over all $\mathbf{x}$ and $\mathbf{y}$. Then the dual function (\ref{eq:dual:2}) can be rewritten as $d(\boldsymbol{\lambda}) = \mathcal{L}(\mathbf{x^*},\mathbf{y^*},\boldsymbol{\lambda})$. By maximising the dual function, we can reduce the duality gap. The Lagrangian dual problem of the primal (\ref{eq:nash:max}) is defined as:
\begin{align}
  & \max_{\boldsymbol{\lambda} \in \mathbb{R}^{|O||V|^2}} d(\boldsymbol{\lambda}) = \mathcal{L}(\mathbf{x^*},\mathbf{y^*},\boldsymbol{\lambda}). \label{eq:nash:5}
\end{align}
The constraints for the dual are the same as those of the primal except constraint (\ref{eq:fsb}) which has already been included in the dual objective function (\ref{eq:nash:5}).
Because (\ref{eq:nash:max}) is convex and all the constraints (\ref{eq:cache})(\ref{eq:dst})(\ref{eq:int2}) and (\ref{eq:int1}) are affine, Slater's condition holds given a solution exists, and the duality gap is zero. Thus, when the dual (\ref{eq:nash:5}) reaches its maximum, the primal also reaches its minimum. The optimal solution for primal problem (\ref{eq:nash:max}) can be derived from the optimal solution for dual problem (\ref{eq:nash:5}).

After decomposition, each node $v_i$ now only needs to optimise its utility locally for a given $\boldsymbol{\lambda}$ by calculating:
\begin{align*}
  &\min \mathcal{L}_i (\mathbf{x}, \mathbf{y}, \boldsymbol{\lambda}) \\
  &= - \ln(U_i - u_i^0) + \sum_{v_j \in N_i} \sum_{o_k \in O} \lambda_{i,j,k}(y_{i,j,k} - x_{j,k}).
\end{align*}
We use the standard projected subgradient method~\cite{boyd2004convex} to derive the algorithm. Let $h(\boldsymbol{\lambda})$ and $\partial d(\boldsymbol{\lambda})$ denote the subgradient and subdifferential of dual function $d(\cdot)$ at point $\boldsymbol{\lambda}$ respectively. Then for every $h_{i,j,k} \in h(\boldsymbol{\lambda})$ we have:
\begin{align*}
  & h_{i,j,k} = y_{i,j,k}^* - x_{j,k}^*  \Longrightarrow  h(\boldsymbol{\lambda}) \in \partial d(\boldsymbol{\lambda}).
\end{align*}
Gradient $\mathbf{h} \triangleq h(\boldsymbol{\lambda})$ points to the direction where $d(\cdot)$ increases fastest. In each iteration, node $v_i$ solves the local subsystem (\ref{eq:dual:3}) to update the dual variable $\boldsymbol{\lambda}$. $t$ represents the $t^{th}$ iteration. $\xi_t$ is the step-size in the $t^{th}$ iteration which can be determined by several standard methods~\cite{boyd2004convex}. The projected subgradient method projects $\boldsymbol{\lambda}$ on its constraint (\ie $\boldsymbol{\lambda} \succeq 0$) in each iteration, and we use $(\cdot)_+$ as a shorthand for the Euclidean projection. 
Eventually $\boldsymbol{\lambda}^{(t)} \rightarrow \boldsymbol{\lambda}^*$ when $t \rightarrow \infty$. The primal solution can be constructed from the optimum $\boldsymbol{\lambda}^*$.
Combining the above, we refer to Eq.~(\ref{eq:dual:3}) as the \textit{distributed optimal algorithm}:
\begin{align}
\label{eq:dual:3}
\begin{cases}
  & \mathbf{x}^{(t)}_i, \mathbf{y}^{(t)}_i = \argmin_{\mathbf{x},\mathbf{y}} \mathcal{L}_i (\mathbf{x}, \mathbf{y}, \boldsymbol{\lambda}^{(t)}) \\
  & \mathbf{h}^{(t)} = - (\mathbf{x}^{(t)}_i - \mathbf{y}^{(t)}_i) \\
  & \boldsymbol{\lambda}^{(t+1)} = (\boldsymbol{\lambda}^{(t)} + \xi_t \sum_{v_j \in N_i \cup \{v_i\}} \mathbf{h}_j)_+
\end{cases}
\end{align}


\begin{mythm}
\label{eq:thm:converge}
  Optimal algorithm converges to its optimum as the sequence $\{\boldsymbol{\lambda}^{(1)},\boldsymbol{\lambda}^{(2)}$ ... $\boldsymbol{\lambda}^{(t)}\}$ converges, if a diminishing step size is used such that $\lim_{t \rightarrow \infty} \xi_t = 0$ and $\sum_{t=1}^{\infty} \xi_t= \infty$.
\end{mythm}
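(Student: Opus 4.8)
The plan is to treat this as the textbook convergence analysis of the projected subgradient method applied to the concave dual $d(\boldsymbol{\lambda})$ maximized over the nonnegative orthant $\mathbb{R}_+^{|O||V|^2}$, specialized to the quantities already assembled in this section. Much of the scaffolding is in place: $d$ is concave (it is the pointwise infimum over $(\mathbf{x},\mathbf{y})$ of functions that are affine in $\boldsymbol{\lambda}$), Slater's condition gives a zero duality gap, and the excerpt already identifies $\mathbf{h}^{(k)}$ with components $h_{i,j,k}=y^{*(k)}_{i,j,k}-x^{*(k)}_{j,k}$ as a valid subgradient of $d$ at $\boldsymbol{\lambda}^{(k)}$. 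Consequently it suffices to show that the iteration in~\eqref{eq:dual:3} drives the dual objective to its maximum $d^{*}$; the zero gap then transfers this to the primal~\eqref{eq:nash:9} and hence to the reconstructed caching strategy, which is exactly what ``converges to its optimum'' asserts.

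First I would fix a dual optimizer $\boldsymbol{\lambda}^{*}$ and track the squared distance $\|\boldsymbol{\lambda}^{(k)}-\boldsymbol{\lambda}^{*}\|^2$. Since Euclidean projection onto the convex set $\mathbb{R}_+^{|O||V|^2}$ is non-expansive and $\boldsymbol{\lambda}^{*}\succeq 0$, the projection $(\cdot)_+$ in~\eqref{eq:dual:3} can only decrease the distance, so dropping it and expanding gives
\[
\|\boldsymbol{\lambda}^{(k+1)}-\boldsymbol{\lambda}^{*}\|^2 \le \|\boldsymbol{\lambda}^{(k)}-\boldsymbol{\lambda}^{*}\|^2 + 2\xi_k (\mathbf{h}^{(k)})^{\top}(\boldsymbol{\lambda}^{(k)}-\boldsymbol{\lambda}^{*}) + \xi_k^2 \|\mathbf{h}^{(k)}\|^2 .
\]
The defining subgradient inequality for the concave $d$, namely $d(\boldsymbol{\lambda}^{*}) \le d(\boldsymbol{\lambda}^{(k)}) + (\mathbf{h}^{(k)})^{\top}(\boldsymbol{\lambda}^{*}-\boldsymbol{\lambda}^{(k)})$, lets me replace the inner-product term by $-2\xi_k\big(d^{*}-d(\boldsymbol{\lambda}^{(k)})\big)$, converting the recursion into a statement about the dual-value gap $d^{*}-d(\boldsymbol{\lambda}^{(k)})$.

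Next I would sum the recursion from $1$ to $K$, discard the nonnegative term $\|\boldsymbol{\lambda}^{(K+1)}-\boldsymbol{\lambda}^{*}\|^2$, and lower-bound each $d(\boldsymbol{\lambda}^{(k)})$ by the running best value $d_{\mathrm{best}}^{(K)}=\max_{k\le K} d(\boldsymbol{\lambda}^{(k)})$ to obtain
\[
d^{*}-d_{\mathrm{best}}^{(K)} \le \frac{\|\boldsymbol{\lambda}^{(1)}-\boldsymbol{\lambda}^{*}\|^2 + \sum_{k=1}^{K}\xi_k^2\|\mathbf{h}^{(k)}\|^2}{2\sum_{k=1}^{K}\xi_k}.
\]
The one quantitative ingredient I still need is a uniform subgradient bound $\|\mathbf{h}^{(k)}\|\le G$, and this comes for free in our model: the box constraints~\eqref{eq:int2} and~\eqref{eq:int1} confine $x_{j,k},y_{i,j,k}\in[0,1]$, so each component $y^{*}_{i,j,k}-x^{*}_{j,k}$ lies in $[-1,1]$ and $\|\mathbf{h}^{(k)}\|\le \sqrt{|O||V|^2}$. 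With $G$ fixed, $\sum_{k}\xi_k=\infty$ makes the $\|\boldsymbol{\lambda}^{(1)}-\boldsymbol{\lambda}^{*}\|^2/\sum\xi_k$ term vanish, and the elementary fact that $\xi_k\to 0$ together with $\sum\xi_k=\infty$ forces $\big(\sum_{k\le K}\xi_k^2\big)/\big(\sum_{k\le K}\xi_k\big)\to 0$ (split the numerator at an index beyond which $\xi_k\le\epsilon$, bound the tail by $\epsilon\sum\xi_k$, and let $\epsilon\downarrow 0$). Hence $d_{\mathrm{best}}^{(K)}\to d^{*}$.

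The routine algebra above is not where the care is needed; the delicate points are threefold. The genuine analytic content is the claim that $\mathbf{h}^{(k)}=\mathbf{y}^{(k)}-\mathbf{x}^{(k)}$ is a subgradient of $d$ — an envelope/Danskin-type fact that I would simply invoke, since the excerpt already states it. Second, because the update is executed distributedly, I would verify that the per-node aggregation $\sum_{v_j\in N_i\cup\{v_i\}}\mathbf{h}_j$ in~\eqref{eq:dual:3} reproduces the global subgradient $\mathbf{y}^{*}-\mathbf{x}^{*}$ component-wise before the centralized analysis can be applied. The main caveat I expect to flag honestly is that this argument establishes convergence of the dual \emph{value} (best-iterate) to $d^{*}$, and via zero gap of the primal objective to its optimum, rather than convergence of the iterate sequence $\boldsymbol{\lambda}^{(k)}$ itself; reading the theorem's phrase ``as the sequence converges'' as value convergence is what the diminishing, non-summable step size buys, whereas convergence of $\boldsymbol{\lambda}^{(k)}$ would additionally require a unique dual optimum or a square-summability condition on $\{\xi_k\}$.
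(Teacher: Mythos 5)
Your proposal is correct and follows essentially the same route as the paper's proof: projection non-expansiveness, the concave subgradient inequality applied to the inner-product term, recursive summation against a fixed dual optimizer $\boldsymbol{\lambda}^{*}$, the best-iterate bound, and the observation that $\bigl(\sum_k \xi_k^2\bigr)/\bigl(\sum_k \xi_k\bigr)\to 0$ under the stated step-size conditions, with zero duality gap transferring dual convergence to the primal. The only cosmetic difference is that you bound $\|\mathbf{h}^{(k)}\|\le\sqrt{|O||V|^2}$ directly from the box constraints, whereas the paper reaches the same constant $K=\sqrt{n}$ via a Lipschitz-continuity argument (mean value theorem plus Cauchy--Schwarz); your caveat distinguishing value convergence from iterate convergence is a fair reading of what the paper actually establishes.
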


The above theorem guarantees convergence\cite{ourtechreport}. $\lambda_{i,j,k}$ can be viewed as the \textit{``shadow price''} of transferring $o_k$ from $v_j$ to $v_i$, which is a ``cost'' for $v_i$ but an ``income'' for $v_j$.
\highlight{Given an ideal network, the distributed optimal algorithm will converge faster than the centralised one due to its parallel computations.}
It is worth emphasising that although the optimal algorithm above distributes the calculations over nodes, the overall computations are not reduced. At the same time, the communication cost increases due to exchanging ``shadow price'' information. However, the overall communication complexity remains the same as that of the centralised solution as we will later show in Section~\ref{sec:complexity}.

\section{FairCache: A Low-Complexity Heuristic Design}
\label{sec:fin}

The distributed algorithm, although optimal, comes with high overheads.
To mitigate this, we propose FairCache, a heuristic algorithm which does not require global knowledge regarding the content and network topology. 
We emphasise that FairCache is a decentralised algorithm for fairly sharing cache capacity across multiple \emph{trusted} stakeholders. It is not intended to be a protocol, by which malicious behaviour (\eg falsifying content demand) can be prevented. Hence, we assume trusted parties who faithfully execute the algorithm, much like is assumed within existing Internet routing schemes.


\subsection{Overview of Heuristics}
\label{sec:fin:rea}

To understand the rationality behind our heuristic, we first give a verbal explanation on the mechanisms of the optimal algorithm expressed in Eq.~(\ref{eq:dual:3}). Recall $\boldsymbol{\lambda}$ represents the shadow price of transmitting an object between two nodes. Each node hence maintains a list of prices for any given object from any given node. In each iteration of the optimisation, a node tries to minimise its total cost using $\boldsymbol{\lambda}^{(t)}$.
During the optimisation, the node adjusts its local caching strategy (via $\mathbf{x_i}$ and $\mathbf{y_i}$) and price list on other nodes (via $\boldsymbol{\lambda}$ and $\mathbf{h}$). Namely, a node may decide to cache an object if it brings significant improvement, or stop retrieving an object from another node due to high cost. 
Meanwhile the node adjusts the price on how to charge its neighbours by offering help. Then the node collects the price adjustments from \textit{all others in the network} and updates its own list. The procedure continues until the performance converges based on certain well-defined criteria (as described next). Future updates are periodically shared to address changes in content popularity. As, generally, popularity changes are relatively slow to occur (hours, rather than minutes), this does not create considerable overheads.
The mechanisms above indicate that we can approximate the optimal algorithm in the following ways:

\textbf{\one~Cut out unpopular content}: This approximation takes advantage of the highly skewed content popularity distribution. It is well-known that the popularity distribution has a long and heavy tail and most content fall into the tail. Removing the tail can significantly reduce the size of the exchanged messages. Meanwhile, the results will not be significantly influenced because of their marginal contribution to the overall utility (whilst also reducing signalling overheads dramatically). Thus, requests for unpopular content will be forwarded towards the origin (as with vanilla NDN~\cite{jacobson:ccn}).

\textbf{\two~Cut out distant nodes}:  This approximation takes advantage of topological locality. Since the utility of retrieving distant content is a decreasing function of the hop count between two nodes, the value quickly diminishes as path length increases. It is more likely to find the requested content in nearby nodes due to content spatial locality\cite{Wang:2015:PUS:2810156.2810162, Dabirmoghaddam:2014:UOC:2660129.2660143}; removing remote nodes should not have significant impact on the result.

\textbf{\three~Reverse direction}: This approximation takes advantage of the behaviours of gradient methods. In the optimal algorithm, the neighbourhood ($r$) gradually shrinks from the network diameter to its optimum (as a result of minimising the cost function). However, most elements in $\mathbf{y_i}$ are already set to zero by the gradient method in the beginning phase of the optimisation. Exchanging messages between nodes that are not going to collaborate is a waste of resources. By growing the neighbourhood set outwardly, instead of shrinking it, we can avoid unnecessary message exchange.



\subsection{FairCache Algorithm}
\label{sec:fin:algo}

\begin{algorithm}[!tb]
  \caption{Fair in-network caching (FairCache) on $v_i$}
  \label{alg:prjgrt}
  \begin{algorithmic}[1]
    \STATE{\textbf{Input:}}
    \STATE{\quad \quad Demand matrix $\mathbf{w}$}
    \STATE{\quad \quad Dual variables $\boldsymbol{\lambda}$}
    \STATE{\quad \quad Search radius $r = 0$}
    \STATE{\quad \quad Improvement threshold $\theta, \theta'$ $(\theta=10^{-2};\theta < \theta')$}
    \STATE{\textbf{Output:}}
    \STATE{\quad \quad Caching decision $\mathbf{x}_i$}
    \STATE{\quad \quad Collaboration decision $\mathbf{y}_i$}

	
	\WHILE{$\theta' \geq \theta$ \textbf{and} $r$ $<$ network diameter}
		\STATE{$r = r + 1$; $t = 0$};
    \WHILE{$t$ $<$ $t_{stop}$}
      \STATE{$\mathbf{x}_i$, $\mathbf{y}_i$ $= \argmin_{\mathbf{x},\mathbf{y}} \mathcal{L}_i(\mathbf{x}, \mathbf{y}, \boldsymbol{\lambda})$}
      \STATE{$\mathbf{h} = \mathbf{y}_i - \mathbf{x}_i$; trim $\mathbf{h}$ for $\forall v_j \in N^+_i$}
        \STATE{$\mathbf{h} = \mathbf{h} + \sum_{\forall v_j \in N_i} \mathbf{h}_j$}
      \STATE{$\boldsymbol{\lambda} = (\boldsymbol{\lambda} + \xi \mathbf{h})_+$}
      \STATE{$t = t + 1$}
    \ENDWHILE
    	\STATE{Update $\theta'$ with current improvement}
    \ENDWHILE
  \end{algorithmic}
\end{algorithm}

\begin{algorithm}[!tb]
  \caption{Construct reduced content set $O'_i$ on $v_i$}
  \label{alg:reduce}
  \begin{algorithmic}[1]
    \STATE{\textbf{Input:}}
    \STATE{\quad \quad Demand matrix $\mathbf{w}$}
    \STATE{\quad \quad Complete content set $O$}
    \STATE{\textbf{Output:}}
    \STATE{\quad \quad Reduced content set $O'_i$}
	
	\STATE{Sort $O$ in decreasing order based on $\mathbf{w}$;}
	\STATE{set $O'_i = \emptyset$;}
	\STATE{\textbf{for each}{ $o \in O$ }\textbf{do}}
		\STATE{\quad \textbf{if} {size ($O'_i$) + size ($o$) $\leq$ $v_i$'s cache capacity}}
			\STATE{\quad \textbf{then} add $o$ to $O'_i$}
		\STATE{\quad \textbf{else} break}
	\STATE{\textbf{end for each}}
  \end{algorithmic}
\end{algorithm}

\begin{algorithm}[!tb]
  \caption{On sending the price update $\mathbf{h}_i$ on $v_i$}
  \label{alg:send}
  \begin{algorithmic}[1]
    \STATE{\textbf{Input:}}
    \STATE{\quad \quad Neighbourhood $N^+_i$}
    \STATE{\quad \quad Caching strategy $\mathbf{x}_i$, $\mathbf{y}_i$}
	
	\STATE{set $\mathbf{h}_i = \mathbf{y}_i - \mathbf{x}_i$;}
	\STATE{\textbf{for each}{ $v_j \in N^+_i$ }\textbf{do}}
		\STATE{\quad construct $\mathbf{h}'_i$ from $\mathbf{h}_i $ based on $(O'_i \cup O'_j)$;}
		\STATE{\quad send $\mathbf{h}'_i$ to $v_j$;}
	\STATE{\textbf{end for each}}
  \end{algorithmic}
\end{algorithm}

\begin{algorithm}[!tb]
  \caption{On receiving all the $\mathbf{h}'_j$ on $v_i$}
  \label{alg:recv}
  \begin{algorithmic}[1]
    \STATE{\textbf{Input:}}
    \STATE{\quad \quad Neighbourhood $N_i$}
    \STATE{\quad \quad Caching strategy $\mathbf{x}_i$, $\mathbf{y}_i$}
	
	\STATE{set $\mathbf{h}_i = \mathbf{y}_i - \mathbf{x}_i$;}
	\STATE{\textbf{for each}{ $v_j \in N_i$ }\textbf{do}}
		\STATE{\quad receive $\mathbf{h}'_j$ from $v_j$;}
		\STATE{\quad set  $\mathbf{h}_i = \mathbf{h}_i + \mathbf{h}'_j$;}
	\STATE{\textbf{end for each}}
  \end{algorithmic}
\end{algorithm}

We embed the above heuristics in our algorithm, FairCache, presented in Algorithm~\ref{alg:prjgrt}. It takes several inputs.  $\mathbf{w}$ is the local demand matrix.
$r$ is for tracking the current number of hops that defines a node's neighbourhood radius.  $\theta'$ is used for recording the utility improvement by increasing the  radius from $r$ to $r+1$, while $\theta$ is the threshold below which FairCache should stop growing the neighbourhood size. $\boldsymbol{\lambda}$ is the list for tracking the shadow prices; this needs to be exchanged amongst nodes (via price adjustment $\mathbf{h}$) in a neighbourhood. Algorithm \ref{alg:reduce} shows how heuristic \one~is implemented to derive a reduced content set.

To apply the approximations, for node $v_i$, instead of making a complete price list, $\boldsymbol{\lambda}$, containing all the content and nodes in the network, node $v_i$ makes a partial $\boldsymbol{\lambda}$ which only includes: \one~the most popular content that can be fit into its local cache (\ie heuristic \one); and \two~the nodes in the neighbourhood defined by $r$ (\ie heuristic \two). It is possible that $v_i$ observes other content in the $\mathbf{h}_j$,  collected from neighbours while $r$ grows (\ie heuristic \three), then $v_i$ dynamically adds those content into its own $\boldsymbol{\lambda}$. $v_i$ can also remove items from $\boldsymbol{\lambda}$ if they are too expensive to retrieve. 
After local optimisation in each iteration, the price adjustment $\mathbf{h}$ will be trimmed before exchange by removing information that is not included in $\boldsymbol{\lambda}$; and removing the unchanged items, \ie the zero values. Essentially, $v_i$ only exchanges the trimmed $\mathbf{h}$ within its neighbourhood and $\boldsymbol{\lambda}$ only contains the aggregated popular content in the neighbourhood. Algorithm~\ref{alg:send} and \ref{alg:recv} detail the logic in lines 13--14 in Algorithm~\ref{alg:prjgrt} whenever sending and receiving price updates in each iteration.
Obviously, these approximations render incomplete information (due to removing unpopular content and distant nodes). To handle the missing $\lambda_{i,j,k}$ in the local optimisation, we let missing $\lambda_{i,j,k} = \infty$ ($i \neq j$), which indicates that the optimisation algorithm should neither exchange unpopular content nor exchange content with distant nodes.

Looking more closely, FairCache consists of two loops. The outer loop (lines 9--19) increases the search radius $r$ by one hop in each iteration. The outer loop stops when the current improvement, $\theta'$, drops below the threshold $\theta$ (\ie $\theta' < \theta$) due to enlarging the neighbourhood. The inner loop (lines 11--17) finishes the local optimisation ---- this is the calculation in Eq.~(\ref{eq:dual:3}) for the given neighbourhood defined by the current radius $r$. The communication overhead come from the operation in line 14 which collects the price adjustments $\mathbf{h}_j$ from the neighbourhood $N_i$. Line 15 adjusts the local shadow price list and updates the $\boldsymbol{\lambda}$ by removing or adding items.

\section{FairCache Complexity Analysis}
\label{sec:complexity}

\subsection{Overview}


The distributed solution requires exchange of control messages between nodes, whilst the centralised solution requires the distribution of the derived solution to \emph{all} nodes (once it has been centrally computed). The optimal solution comprises of $|O|$ matrices of size $|V|^2$. The aggregation of each row $i$ (in all matrices) represents one specific optimal strategy for the corresponding node $v_i$ in the system. To transmit these rows to the nodes, the aggregated communication complexity is $\Theta(|O| \cdot |V|^2)$. This is the same for the distributed solution, as this performs the same computation (shared across multiple nodes). The rest of this section will delineate the complexity of FairCache. Table~\ref{tab:complexity} presents an overview of the computational, communication, and space complexity of all three solutions.

\subsection{FairCache Heuristic Complexity}

\highlight{FairCache's computation, communication, and space complexities all correlate with the size of the content set and the neighbourhood set.} Herein we evaluate the complexity savings of FairCache's three heuristics.

Heuristic \one~is used to reduce the content set size ($|O'|$ from $|O|$). The level of reduction depends on both the popularity distribution of content and the percentage of requests we want to capture in the request streams. It is well documented that content popularity follows a Zipf-like distribution~\cite{Cha:2007:ITY:1298306.1298309, 4539688}. If we assume a Zipf distribution with $\alpha = 1.0$, for a content set of $10^6$ \highlight{equal sized objects}, we are able to cover $72.8\%$ of the requests by caching only $2\%$ of $|O|$. In other words, $|O'| = 0.02 \cdot |O|$ indicates a $98\%$ reduction. 
\highlight{Even if we only cache $1000$ objects (\ie $0.1\%$ of $|O|$), we can still capture $52\%$ of requests, whilst leading to a significant reduction in $|O'|$, \ie up to $99.9\%$.}


Heuristic \two~is used to reduce the size of collaboration neighbourhood, which is $|V|$ for the distributed solution.
By investigating the FairCache algorithm, we can see that the communication complexity is due to exchanging  $\mathbf{h}$ in order to update the local shadow price $\boldsymbol{\lambda}$, \highlight{\cf the last equation in Eq.(\ref{eq:dual:3}) and its corresponding line 14 in Algorithm \ref{alg:prjgrt}. The overhead therefore consist of two parts.}
The first part is induced by replying the queries on $\mathbf{h}$ from the nodes having $v_i$ in their neighbourhood, namely $N^+_i$.
The second part is induced by collecting $\mathbf{h}$ from the nodes in $v_i$'s own neighbourhood, namely $N_i$. Given that the communication complexity is measured by the number of exchanged messages, the complexity $\phi_i$ of node $v_i$ can be calculated as
\begin{align}
  & \phi_i = c \cdot |O'| \cdot (|N^+_i| + |N_i|) \label{eq:ovh:1}
\end{align}
Scalar $c$ in Eq.~(\ref{eq:ovh:1}) represents a constant factor for communication complexity, and can be understood as message size or other protocol-dependent factors. Now we show how to calculate the overall complexity $\Phi$ at system level from $\phi_i$ in the following.


Given any $v_j \in N_i$, a neighbourhood relation can be written as a tuple $(v_i, v_j)$. Calculating $\sum_{v_i \in V} |N_i|$ is equivalent to counting how many tuples there are in the whole system. Obviously, $v_i \in N_j \Longleftrightarrow v_j \in N^+_i, \forall v_i, v_j \in V$, in other words, as long as there is a tuple $(v_i,v_j)$ for $N_i$, there must be a tuple $(v_j,v_i)$ for some $N^+_j$, and vice versa. Hence we can show $\sum_{v_i \in V} |N_i| = \sum_{v_i \in V} |N^+_i|$. Because one tuple $(v_i, v_j)$ represents one message exchange from $v_i$ to $v_j$, by using double counting technique, it is obvious that each message will be counted twice in the calculation if we try to aggregate all $\phi_i$: \ie $(v_i,v_j)$ is in both $v_i$'s $ N_i$ and $v_j$'s $ N^+_j$. Therefore, overall complexity $\Phi$ can be calculated as half of the aggregated $\phi_i$ of all $v_i$ in $V$ as below.
\begin{align}
  \Phi = \frac{1}{2} \sum_{v_i \in V} \phi_i &= \frac{c}{2} \cdot |O'| \cdot (\sum_{v_i \in V} |N^+_i| + \sum_{v_i \in V} |N_i|) \\ &= c \cdot |O'| \cdot \sum_{v_i \in V} |N_i| \label{eq:ovh:3}
\end{align}

The intuitive explanation of eliminating $N^+_i$ in calculating $\Phi$ is that we only need to count the aggregated messages sent out from a node (\ie the size of a node's own neighbourhood) since for each message there is always a correspondence in the network to receive it.
Clearly, if we denote the average neighbourhood size as $|\overline{N}|$, then the system level communication complexity of FairCache is $\Phi = \Theta(|O'| \cdot |V| \cdot |\overline{N}| )$ as below.
\begin{align}
  \Phi = c \cdot |O'| \cdot \sum_{v_i \in V} |N_i| &= c \cdot |O'| \cdot \sum_{v_i \in V} |\overline{N}| \\
  &= \Theta(|O'| \cdot |V| \cdot |\overline{N}| ) \label{eq:ovh:4}
\end{align}

Similarly in the distributed solution Eq.(\ref{eq:dual:3}), since each node has the whole network of size $|V|$ as its neighbourhood and uses the complete content set $O$ in the optimisation. Then its complexity can be calculated as $\Phi = \Theta(|O| \cdot |V|^2)$ by plugging the term $|\overline{N}| = |V|$ and $|O'| = |O|$ into Eq.(\ref{eq:ovh:4}). As we can see, the derived complexity is exactly the same as that of the centralised solution.


Heuristic \three~is used to minimise the size of the neighbourhood by increasing, rather than decreasing, the neighbourhood size. As we can see from the analysis above, the complexity of FairCache is proportional to its average neighbourhood size $|\overline{N}|$. Reducing its size is therefore beneficial. The authors in \cite{Wang:2015:PUS:2810156.2810162} have proposed a neighbourhood model to calculate $|\overline{N}|$ in its closed form on any general network topologies. With its most general form, the overall complexity of the network of average $r$-hop neighbourhood can be calculated as 
\begin{align*}
  \Phi &= |O| \cdot |V| \cdot n_1 \cdot \Big( 1 + \left[ \frac{n_2}{n_1} \right] + \left[ \frac{n_2}{n_1} \right]^{2} + \cdots + \left[ \frac{n_2}{n_1} \right]^{r-1} \Big) \\
   &= \Theta \Big( |O| \cdot |V| \cdot \left[ \frac{n_2}{n_1} \right]^{r} \Big) \qquad \qquad \forall n_1, n_2 \in \mathbb{N}, n_2 > n_1
\end{align*}
where $n_1$ and $n_2$ denote the average number of one-hop and two-hop neighbourhoods, and $r$ represents the search radius.  If we assume $n_2 > n_1$ (which holds for all scale-free networks), the complexity of FairCache will grow exponentially if the search radius $r$ increases.\footnote{Note that most natural graphs like Internet, ISP networks, and social networks are all scale-free~\cite{SpringN:Rocketfuel}} Hence, by ensuring a small $r$, Heuristic \three~dramatically improves scalability. Importantly, it has been shown that $r$ is very small in most network optimisation problems~\cite{Wang:2015:PUS:2810156.2810162} (confirmed in Section \ref{sec:exp}).

\highlight{Besides the memory for storing the actual content objects, extra space is needed to store the input parameters of the optimisation problem (\ie $\mathbf{w}$ and $\mathbf{\lambda}$). The space complexity specifically refers to such extra space. Obviously, the space complexity is also decided by the size of content set and the size of neighbourhood. Hence FairCache on each node needs $\Theta(|O'| |\overline{N}|)$ memory space to keep track of the information needed for optimisation. If we assume the average object size is $800$~KB (much smaller than the average YouTube video size: $8$~MB \cite{Cha:2007:ITY:1298306.1298309}) and each node has $8$~GB memory installed. Each node needs to maintain roughly $10^4$ objects. If we allocate $64$-bit space to store the information for each object, given a neighbourhood of $10$ nodes, even in the worst case wherein all these content sets are disjoint, each node only needs approximately $800$~KB extra memory. However, spatial locality is very common \cite{Wang:2015:PUS:2810156.2810162} in content networks, which means there is a significant overlap regarding the stored content among nearby routers. If we assume that only $10\%$ of objects are overlapping, the $800$~KB can be further reduced to $651$~KB (by summing up a geometric sequence to calculate). In the end, the extra memory overhead (\ie the ratio between the memory space for parameters and memory space for objects) is less than $0.01\%$.}



\subsection{Complexity Summary}

Table~\ref{tab:complexity} summarises our previous analysis of the complexity of the three different solutions. We wish to highlight the following key points:

\begin{itemize}

\item \highlight{The computation and space complexity of the centralised and distributed solutions are identical.} It is always $\Phi = \Theta(|O| \cdot |V|^2)$. However, because the distributed solution is able to share the complexity across each node in the network, in practice, it leads to a more scalable design and avoids the single point of failure.

\item FairCache is able to reduce all complexities by limiting the sizes of the content set and its collaborative neighbourhood. The complexity of FairCache is only a small fraction of the distributed solution, more precisely $\frac{|O'|}{|O|} \cdot \frac{|\overline{N}|}{|V|}$. We have shown $|O| \gg |O'|$, also in reality $|V| \gg |\overline{N}|$ (as showed in \cite{Wang:2015:PUS:2810156.2810162}). We also confirm this experimentally in Section \ref{sec:exp}), hence the improvement is significant and improvements should grow as the network size increases.

\item \highlight{FairCache on each node needs $\Theta(|O'| |\overline{N}|)$ memory space to keep track of the information needed for optimisation (\ie $\mathbf{w}$ and $\mathbf{\lambda}$). In practice this introduces only negligible overhead. As the average object size is bigger, the overhead becomes even smaller.}

\end{itemize}

\begin{table}[!tb]
  \centering
  \begin{tabular}{  l  l  l  l  }
  \multicolumn{4}{c}{Complexity comparison of three different solutions} \\
    \hline
    Type of complexity & Centralised & Distributed & FairCache  \\
    \hline
    Computation        & $\Theta(|O| |V|^2)$  & $\Theta(|O| |V|^2)$  & $\Theta(|O'| |V| |\overline{N}|)$ \\
    Communication    & $\Theta(|O| |V|^2)$  & $\Theta(|O| |V|^2)$  & $\Theta(|O'| |V| |\overline{N}|)$ \\
    Memory space     & $\Theta(|O| |V|^2)$  & $\Theta(|O| |V|^2)$  & $\Theta(|O'| |V| |\overline{N}|)$ \\
    Individual (worst) & $\Theta(|O| |V|^2)$  & $\Theta(|O| |V|)$  & $\Theta(|O'| |\overline{N}|)$ \\
    \hline
  \end{tabular}
  \caption{The overall and (worst) individual complexity of the three solutions. The communication complexity of the centralised solution does not occur during the computation but comes afterwards when distributing the solution to all nodes. \highlight{The distributed solution does not reduce any complexity of the centralised solution but, instead, shares it across multiple nodes. FairCache significantly reduces all three complexities by applying its heuristics.}}
  \label{tab:complexity}
\end{table}

\section{FairCache Evaluation}   
\label{sec:exp}

\subsection{Methodology}
\label{sec:method}

To evaluate FairCache, we perform extensive simulations using the publicly available LiteLab platform~\cite{7444732}. We use several topologies. First, we use real topologies collected by the Rocketfuel project~\cite{SpringN:Rocketfuel}; namely, two ISP router-level topologies: Sprint (604 nodes, 2,279 edges) and AT\&T (631 nodes, 2,078 edges). \gchange{This embodies the usecase where an individual network has allowed the deployment of caches in their network by multiple parties (\eg via network function virtualisation).}
Second, we use traces from Guifi~\cite{6379139}. Guifi is the largest open wireless community mesh network in the world. It allows any user to purchase equipment and become part of the network. We use its core network topology in the  Catalunya region (735 nodes, 1,059 edges). 
\highlight{Third, we use the CAIDA ITDK trace\cite{caidatopo2012} which takes a snapshots of the Internet AS topology every 24 hours. We use the 7/01/2017 trace (25,107 nodes, 49,458 edges).
This captures the situation where each AS operates its own caches, but chooses to collaborate with neighbouring ASes.}
Fourth, to allow us to vary key graph parameters, we also generate synthetic networks based on two models: the Barab\'{a}si-Albert (BA) model and the Erd\H{o}s-R\'{e}nyi (ER) model. Four parameter sets are used for these synthetic networks: $\{BA_1: m=2\}$, $\{BA_2: m=4\}$, $\{ER_1: p=1.1\cdot \log(n)/n\}$ and $\{ER_2: p=1.5 \cdot \log(n)/n\}$.
\highlight{If there are multiple components in a synthetic network, we only use the largest one. $\log(n)/n$ in the setting guarantees there is only one single giant component in the network with high probability\cite{bornholdt2003handbook}. We have also analysed many recent AS-level topologies using the CAIDA ITDK traces~\cite{caidatopo2012}. Since most AS networks have an average degree of $4$, we use $BA_2$ configuration to reflect this fact.} \gchange{Therefore, we note that our synthetic networks represent both a per-router and per-AS topologies.}
For each topology we attach a single client to each edge router (\ie with degree of 1). For example, this results in 161 clients in Sprint; 207 clients in AT\&T; and 200 in Guifi. We then randomly select between 10 and 20 distinct routers to attach a source to. Each router is then allocated a given cache capacity, which we vary; the default is 4 GB, which is $<0.1\%$ of the corpus. We select a low value to be representative of feasible cache capacity in a wide area network with a large corpus.


Using the above topologies, clients generate requests at each simulation tick, which are then routed through the network to either a content source or an intermediate cache based on the strategy employed. We base our content set on the Youtube trace from~\cite{Cha:2007:ITY:1298306.1298309}. This contains 1,687,506 objects (average size is $8.0$~MB and aggregated size is $12.87$~TB). We use the view count information to fit a \textit{Zipf}$(\alpha)$ distribution $(\alpha = 0.9537)$ to model the overall content popularity. To explore the impact of different request patterns, we also perform sensitivity analysis on $\alpha$. 
Throughout this section, we use our distributed optimal algorithm (\ie Eq.~(\ref{eq:dual:3})) as an optimal benchmark to compare FairCache against. Each result is averaged over 50 runs; errorbars are not plotted if they are sufficiently small ($<5\%$).


\subsection{Scalability}
\label{sec:exp:con}

\begin{figure*}[!htp]
  \centering 
  \subfloat[Convergence on ISP networks.]{\label{fig:convergence:3}\includegraphics[width=4.5cm]{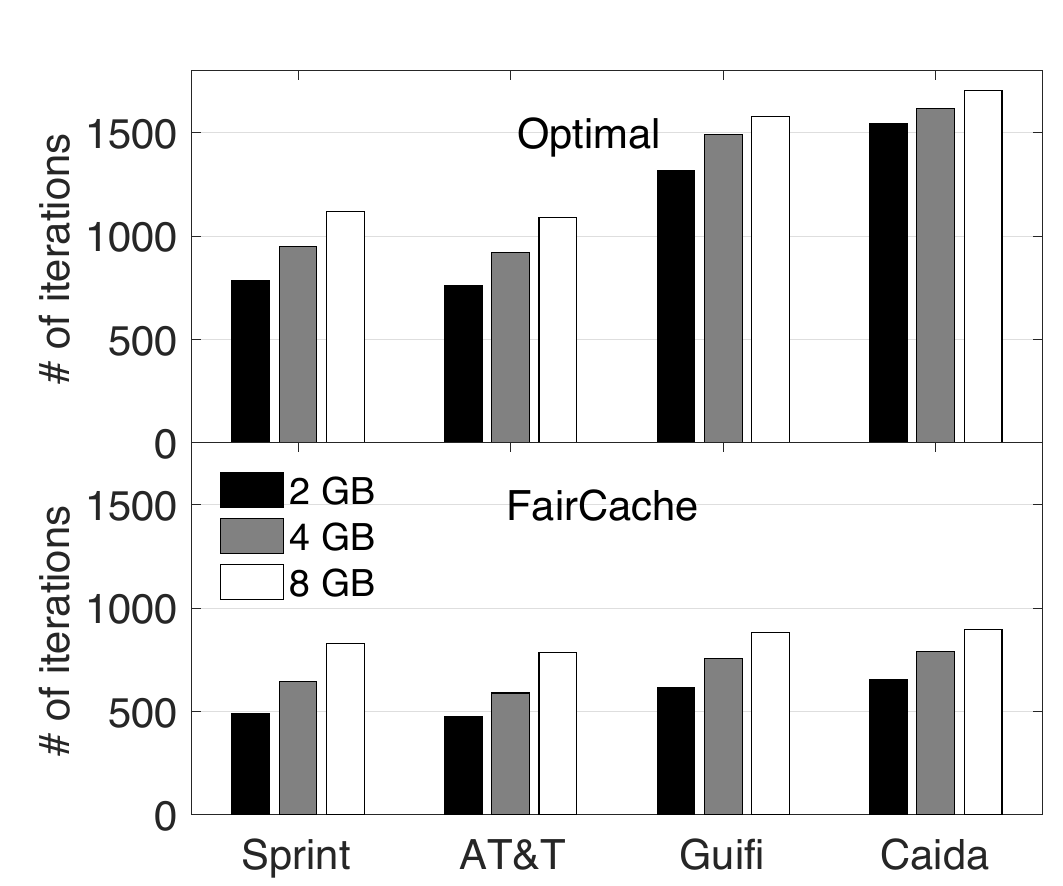}}
  \subfloat[Convergence on synthetic networks.]{\label{fig:convergence:2}\includegraphics[width=4.5cm]{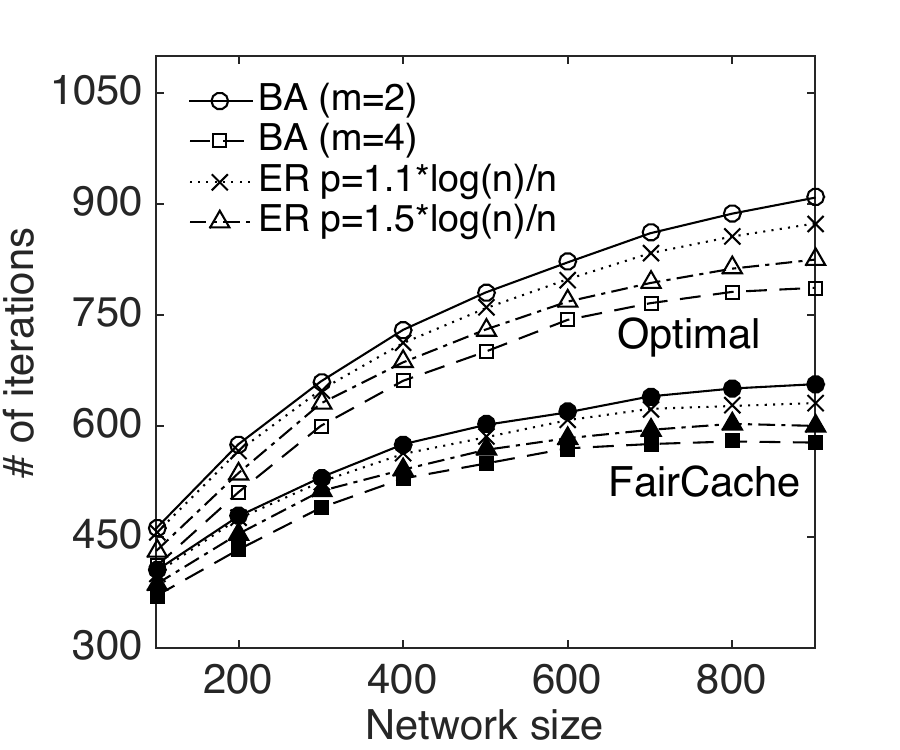}}
    \subfloat[Reduction of traffic overhead.]{\label{fig:cost:1}\includegraphics[width=4.5cm]{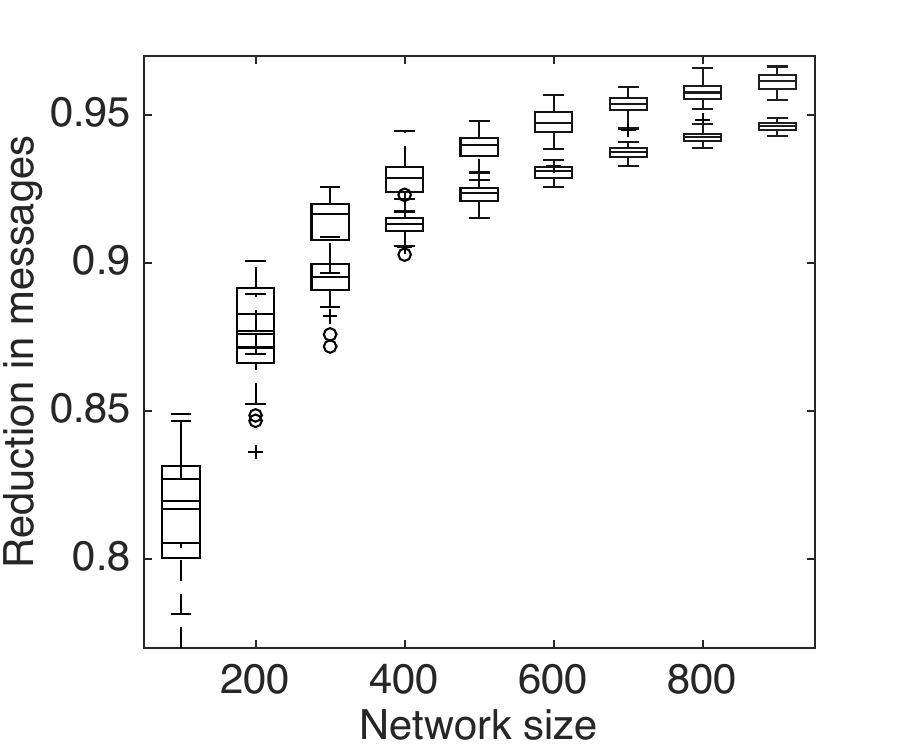}}
    \subfloat[The accuracy of FairCache.]{\label{fig:accuracy}\includegraphics[width=4.5cm]{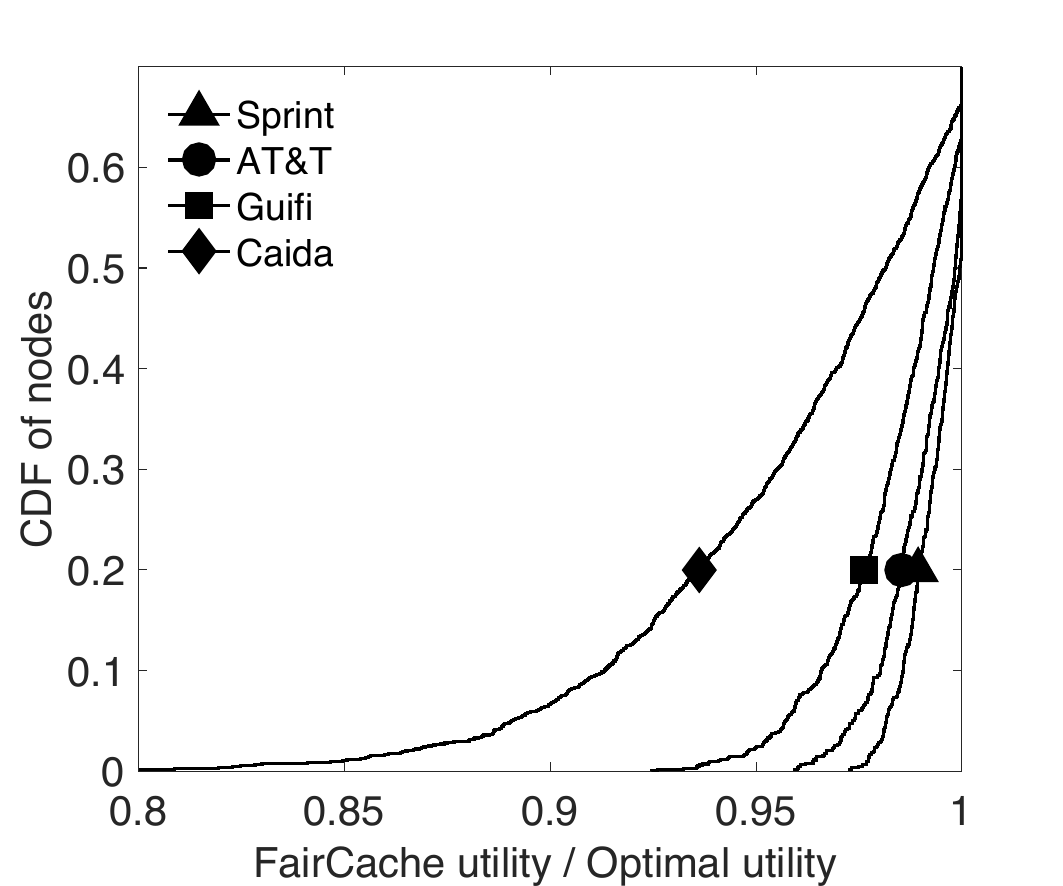}}
  \caption{Compared to the optimal algorithm, FairCache is more scalable on both real and synthetic networks. FairCache has a faster convergence rate and generates less traffic overhead than the optimal. Meanwhile FairCache achieves high accuracy.}
\label{fig:convergence}
\end{figure*}

We start by exploring scalability, measured by FairCache's convergence rate, \ie how many rounds of message exchange it takes the algorithm to bootstrap. This happens once at initiation: future dynamics are addressed using periodic low cost updates that are algorithmically trivial. Figure~\ref{fig:convergence:3} compares the convergence rates of the optimal (the upper figure) and FairCache (the lower figure) on the various topologies. The optimal needs significantly longer time to bootstrap than FairCache. Unsurprisingly, larger cache sizes also lead to a longer convergence time, as more state must be exchanged. To investigate how network size impacts the convergence rate, we use synthetic topologies with $4$~GB caches. The lines in Figure~\ref{fig:convergence:2} are clearly divided into two groups: the upper one is the optimal (with hollow markers) and the lower one is FairCache (with filled markers). The convergence rate degrades as the network size grows. Importantly, though, the increase in convergence time only grows sub-linearly, stabilising at networks of size 1k; we experimented with topologies of up to 9k nodes to find consistent results.



We also measure FairCache's scalability by its traffic overheads. Clearly, it is undesirable to generate large amounts of control messages to bootstrap. In this experiment, we measure the aggregated size of control messages for both the distributed optimal and FairCache as $C_O$ and $C_F$ respectively. We then calculate the traffic reduction as $\frac{C_O - C_F}{C_O}$. Figure~\ref{fig:cost:1} presents a box plot of the results for both $BA_1$ (upper boxes) and $ER_1$ (lower boxes) topologies. It shows that FairCache is able to achieve over 80\% traffic reductions, even on small networks of 100 nodes. As the network size increases, the benefit of using FairCache becomes more obvious. In a network of 900 nodes, FairCache attains 95\% reductions. This equates to significant traffic volumes; in one iteration, a 500 node network with $10^3$ objects can save 887 MB of control traffic via FairCache (leaving only 66.8 MB). With FairCache, on average, each cache only introduces 136~KB traffic overhead in an iteration.  We can therefore combine the above message overhead and convergence measurements to calculate the convergence time. If we configure the rate of control messages to 100~KB/s, FairCache takes 11 minutes to bootstrap. This is just 4.6\% of the time taken by the distributed optimal algorithm. Given a saturated 54~Mbit link, the FairCache control messages would therefore consume just 1.4\% of \highlight{bandwidth}. Importantly, this is only a bootstrap process; changes in request patterns are addressed with low cost updates within each node's neighbourhood.
\highlight{Even in highly dynamic situations where demands are volatile, we can let nodes exchange demand information in the background while running FairCache. Once the new demand matrix is constructed, FairCache simply re-calculates the new solution. Recall our setting in Section~\ref{sec:complexity} where each node needs to maintain $651$~KB demand information. To guarantee FairCache continuously runs, a total of $651 \times 10$~KB needs to be exchanged among $10$ nodes within $11$ minutes, these updates constitute under $10$~KB/s. These values can be configured to reflect the operating environment (we anticipate that in many cases operators would transfer state at much higher rates). Overall, we believe these overheads are more than acceptable for the overall performance gains (\S\ref{sec:exp:per})}

\subsection{Accuracy}
\label{sec:exp:vs}

FairCache significantly reduces the convergence time and messaging overhead of fairly allocating caching responsibilities. These improvements potentially come at the cost of accuracy (\ie lower utility than optimal solution). We next inspect the accuracy sacrifice required to obtain these improvements.

To measure the accuracy of FairCache, we compare it against the optimal algorithm using multiple topologies of different sizes. We first run the optimal algorithm and measure the utility $U_i$ for every node $i$. Similarly, we run FairCache and measure the utility $U_i'$. We then calculate the accuracy of FairCache as its ratio to the optimal for every node, \ie $\frac{U_i'}{U_i}$. Figure~\ref{fig:accuracy} plots the per node CDF of this ratio. We can see that FairCache achieves very high accuracy. For large networks like Guifi, all the nodes achieve an accuracy of over 92\%. For medium size networks like Sprint, all the nodes have at least 97\% accuracy and about 50\% of the nodes reach 100\% accuracy. \highlight{Besides Figure~\ref{fig:accuracy}, we also measure the \emph{aggregated} accuracy using $\frac{\sum_i U_i'}{\sum_i U_i}$. We find it is always above $95\%$ for medium-sized networks, whilst it decreases slightly for larger networks (\ie 3\% drop from Sprint to Guifi). Even for the very large CAIDA topology (over $25$k nodes), the average accuracy is still $\approx 95\%$ with fewer than $6\%$ nodes that have an accuracy drop of $10\%\sim20\%$.} These findings are consistent across the other topologies. To validate that these benefits continue to be enjoyed by large topologies, we also repeated the experiments presented in Figure~\ref{fig:convergence:2} on topologies ranging from 1k--9k nodes. Again, we find high levels of accuracy, stabilising at 95\%. 


The results confirm the rationale behind the heuristics used by FairCache. The approximation introduces almost negligible degradation in the accuracy. The main reason is that the highly skewed content popularity means that the bulk of caching decisions are limited to the most popular objects. This means that FairCache can attain high accuracy without requiring to share information about all objects (unlike the optimal). Further, by localising interactions to neighbouring nodes, FairCache can scale-up easily, without be overly affected by increasing network sizes.

\subsection{Price of Fairness}
\label{sec:exp:pof}

\begin{figure*}[!htp]
  \centering
  \subfloat[PoF on synthetic networks.]{\label{fig:perf:3}\includegraphics[width=4.5cm]{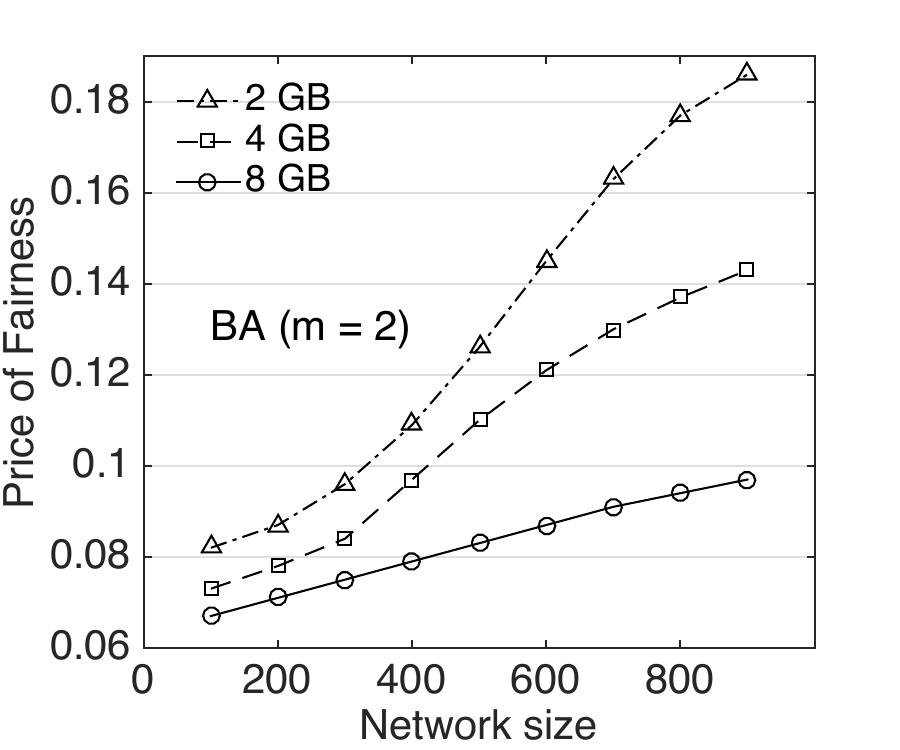}}
  \subfloat[PoF on ISP networks.]{\label{fig:perf:4}\includegraphics[width=4.5cm]{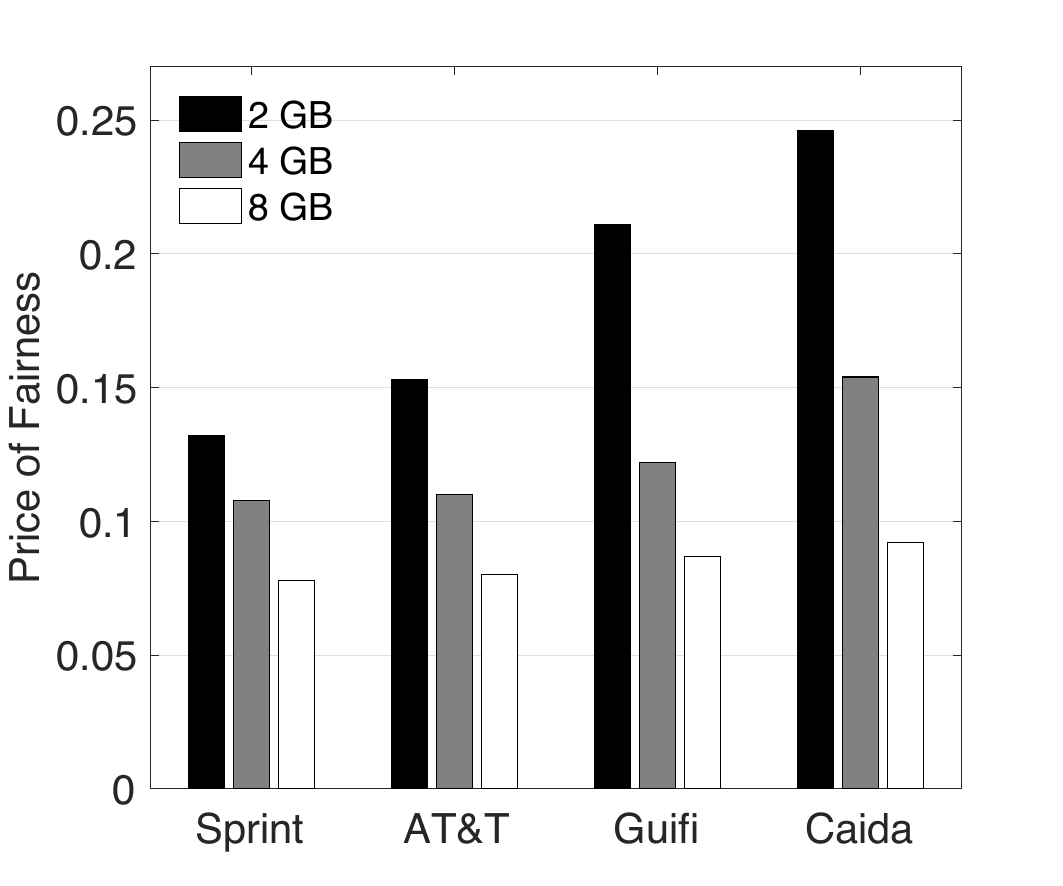}}
  \subfloat[Byte hit rate, 4GB cache.]{\label{fig:perf:1}\includegraphics[width=4.5cm]{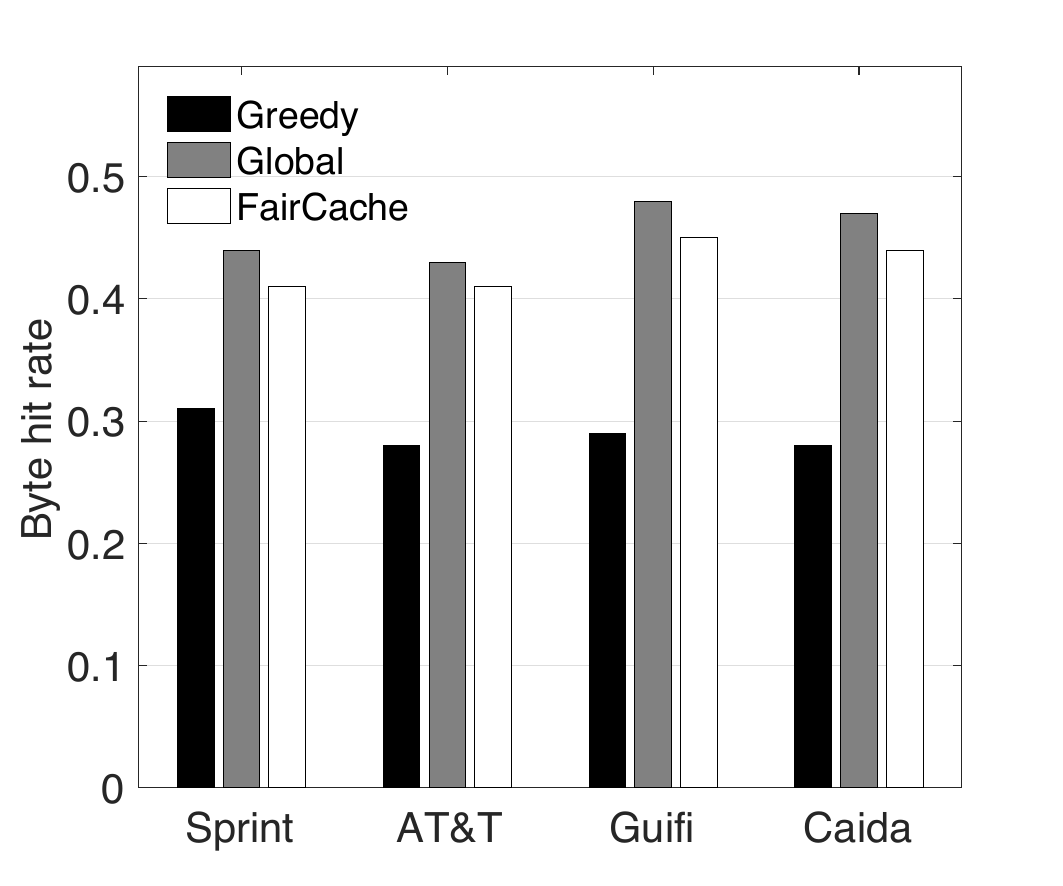}}
  \subfloat[Footprint reduction, 4GB cache.]{\label{fig:perf:2}\includegraphics[width=4.5cm]{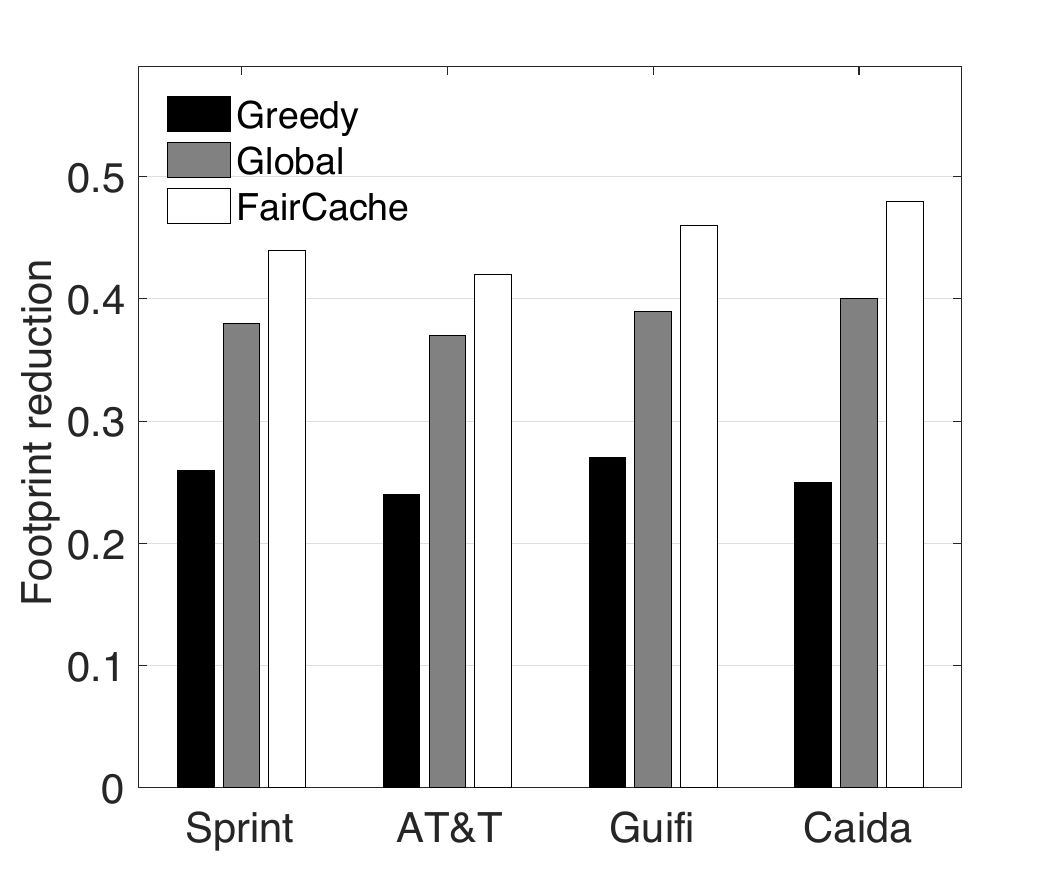}}
  \caption{FairCache achieves fairness by trading off some efficiency. However, a large cache size can effectively reduce PoF. In reality, FairCache is able to achieve very similar performance as Global, and is superior to Greedy in all cases.}
  \label{fig:convergence}
\end{figure*}


FairCache aims to realise fair collaboration amongst nodes, which could cause a degradation in aggregated global utility. We use the \textit{Price of Fairness} (PoF) to measure the loss in utility. \highlight{The PoF is calculated as the ratio between the aggregated utility loss of all nodes using FairCache and the global optimal that does not consider fairness~\cite{bertsimas2011price}}. A higher PoF value indicates a larger utility sacrifice.

Figure~\ref{fig:perf:3} and \ref{fig:perf:4} plot the PoF results of using both real and synthetic networks with three cache sizes. Both figures convey the same information, which is that the PoF increases as network size increases. 
We experiment with both realistic and synthetic networks of up to 9k nodes (Table \ref{tab:scalability} summarises the results), to find that the PoF stabilises after reaching a size of $\approx$3k nodes with a maximum PoF of 24\% (and a maximum of 20\% in the real topologies, \eg~Guifi). This is not negligible, but is likely not significant enough to dissuade caches that are interested in fairness from using FairCache.
Interestingly, our results also show that increasing the cache size is an effective way to ameliorate the loss in efficiency. In Figure~\ref{fig:perf:3}, a $4$~GB cache significantly improves the PoF. Using a $2$ GB cache, the PoF increases by 11\% when the network size increases from 100 to 900, whereas the PoF only increases by 3\% if a $4$ GB cache is used. Figure~\ref{fig:perf:4} shows similar properties with, for example, a 57\% improvement in PoF when increasing the cache size from $2$ GB cache to $4$ GB in Guifi. 

Overall, we believe that an average PoF of $<8\%$ is a cost worth paying for those concerned by a need for fairness. Moreover, FairCache exhibits good scalability regarding both accuracy and PoF, as the results in Table \ref{tab:scalability} show. The accuracy only slightly degrades (1.5\%) from 1000 nodes to 9000 nodes, and is always above 94.7\%. The accuracy stabilises after 2000 nodes. Similarly, PoF stabilises after reaching around 3000 nodes. For the 2GB cache configuration, PoF is capped by 24.3\%; 4GB by 15.6\%; 8GB by 11.4\% (not included in Table \ref{tab:scalability} due to space limit). 
The scalability of FairCache can be explained as follows: only small neighbourhoods play an important role in deciding a node's overall performance. As the size of the neighbourhood is relatively unaffected by the network size, this property always exists.




\begin{table}[!tb]
  \centering
  \begin{tabular}{  l  l  l  l  l  l  }
  \multicolumn{6}{c}{Scalability test on PoF and accuracy using real ISP topologies.} \\
    \hline
    ASN & \#nodes & \#edges & Accuracy & PoF (2GB) & PoF (4GB)  \\
    \hline
    \#1755 & 295   & 544     & 97.3\%   & 16.5\%  & 11.3\% \\
    \#3356 & 1620 & 6743   & 96.2\%   & 20.8\%  & 13.7\% \\
    \#1221 & 2669 & 3181   & 95.4\%   & 22.6\%  & 14.9\% \\
    \#2914 & 4670 & 7618   & 94.8\%   & 23.9\%  & 15.4\% \\
    \#1239 & 7337 & 9924   & 94.7\%   & 24.3\%  & 15.6\% \\
    \#7018 & 9430 & 11682 & 94.8\%   & 24.1\%  & 15.2\% \\
    Caida & 25107 & 49458 & 94.6\%   & 24.4\%  & 15.4\% \\
    \hline
  \end{tabular}
  \caption{\highlight{Key metrics for increasing network sizes (295 - 25k nodes). Networks are real ISP topologies taken from Rocketfuel and CAIDA AS-level trace.}}
  \label{tab:scalability}
\end{table}

\subsection{Caching Performance}
\label{sec:exp:per}

The previous section has shown that utility is reduced by considering fairness. Next, we explore performance from the perspective of traditional metrics: byte hit rate and footprint reduction. Hit rate is a conventional metric to measure saving on inter-domain traffic, whilst footprint reduction is the reduction on the product of traffic volume and distance.

We compare FairCache against two other strategies: \one~\emph{Greedy}, which computes the local optimal for each cache without collaboration; and \two~\emph{Global}, which maximises the aggregated utility. Figure~\ref{fig:perf:1} and \ref{fig:perf:2} plot the results on the real networks using $4$ GB caches. Naturally, Figure~\ref{fig:perf:1} shows that Global achieves the best hit rates due the fact that it optimises the overall network. That said, FairCache only performs slightly worse, with a 5\%--10\% performance degradation. Compared to Greedy, FairCache is consistently superior with at least a 28\% improvement. This shows that, regardless of fairness, FairCache can offer significant performance improvements over local algorithms (note that Greedy is the theoretical upper bound of algorithms such as Least Recently Used).
When inspecting the traffic footprint reduction, performance is even higher. FairCache is superior in all networks. Although the reasons are intuitive for Greedy, which sees nodes locally optimising, it is more surprising in Global. The reason is that FairCache only requests from nearby caches (limited by $r$). In contrast, Global uses \emph{any} node in the network. This increases hit rates, but results in more traffic.




To have a closer look how utility is spread across caches, we select the AT\&T network and study the utility distribution in the network (\ie how are traffic savings distributed across caches). Figure~\ref{fig:hit:1:1} plots the CDF of normalised utility values across each node (normalised by the top value per simulation). By comparing Greedy and FairCache, we see that \emph{every} node is better off through collaboration using FairCache (note this is also the case across all other topologies and cache sizes). On the other hand, the Global strategy intersects with both Greedy and FairCache, \ie some caches in Global get lower utility than Greedy. The area between the lines indicates the percentage of caches that are worse off due to global optimisation. The Global strategy leads to 13\% of nodes getting worse off compared to Greedy, and 20\% compared to FairCache. With Global, these nodes should rationally cease to cooperate. Again, regarding the aggregated utility, Global is only about 5\% better than FairCache. Moreover, the CDF curve of Global is more stretched than that of FairCache, which indicates there are much larger variations in nodes' utilities when using the Global strategy, \ie benefits are not evenly distributed.

\begin{figure}[!htp]
  \centering 
  \subfloat[Cumulative distribution of utilities.]{\label{fig:hit:1:1}\includegraphics[width=4.5cm]{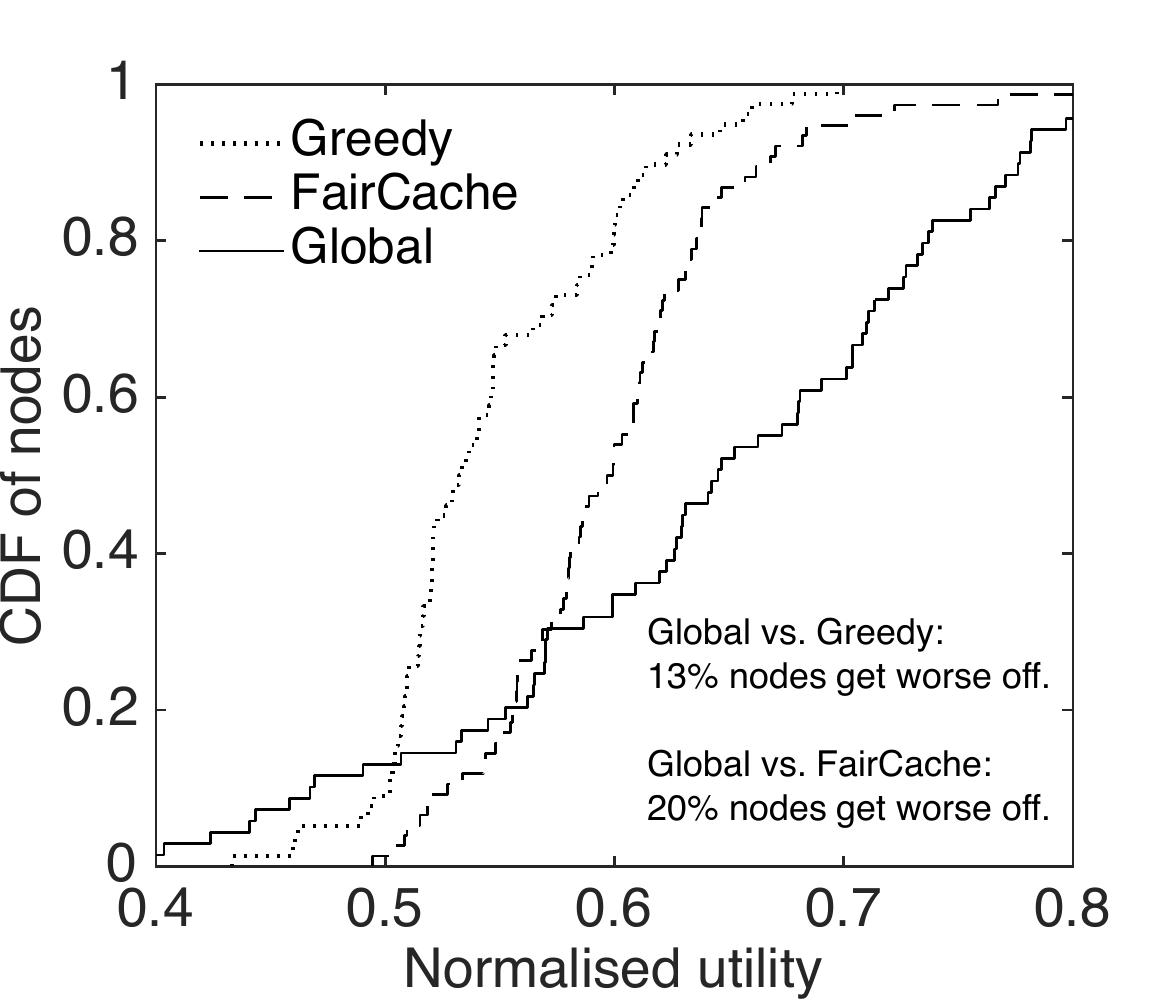}}
  \subfloat[Betw. centrality vs. utility.]{\label{fig:hit:1:2}\includegraphics[width=4.5cm]{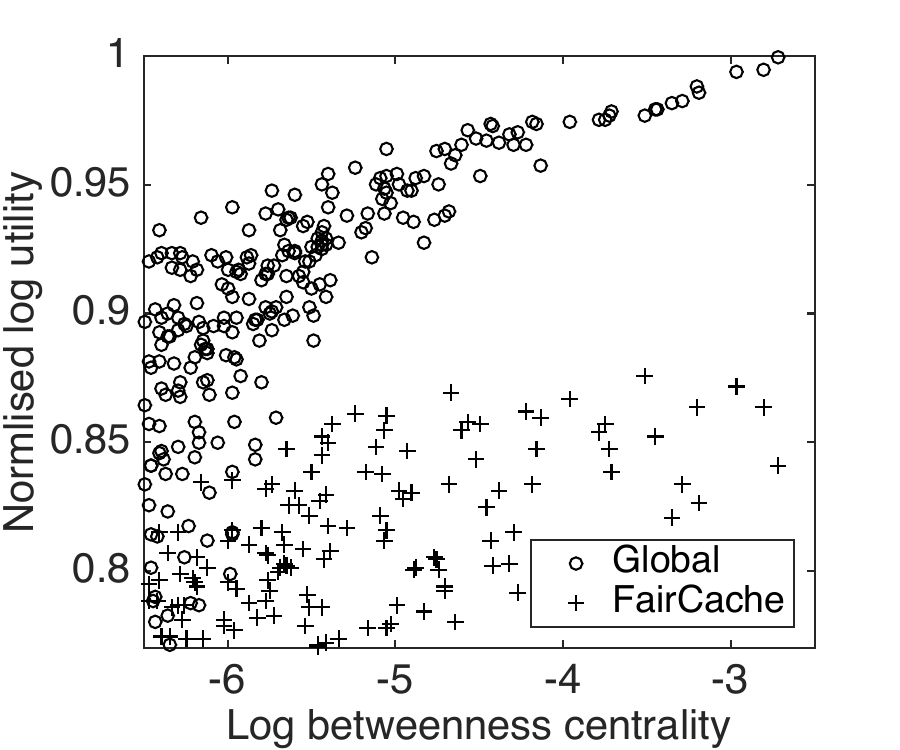}}
  \caption{Comparison of strategies on AT\&T, 4~GB cache size.}
  \label{fig:hit:1}
\end{figure}

Figure~\ref{fig:hit:1:2} shows the log-log plot of nodes' normalised utility as a function of betweenness centrality~\cite{6739053, Chai:2012:CLM}. Nodes with a high betweenness are core routers, whilst those with low betweenness are usually found at the edge. Interestingly, when nodes use the Global strategy, a node's utility strongly correlates with its position in the network: core nodes gain the highest utility. This is because the Global optimal tends to place all the popular (\ie high value) content at the core to reduce duplicates --- a theoretically attractive, but practically infeasible approach. In contrast, FairCache significantly weakens this correlation. This is beneficial as it means that utility is also increased at the edge caches. As well as improving fairness, it also reduces load in the backbone and provides consumers with lower delay access to object. This also contributes to FairCache's high traffic reductions, as hits are pushed closer to clients.

\subsection{Sensitivity Analysis of Spatial/Content Locality}
\label{sec:sens}

\begin{figure*}[!htp]
  \centering 
  \subfloat[Byte hit rate vs. spatial locality]{\label{fig:guifi:1}\includegraphics[width=4.5cm]{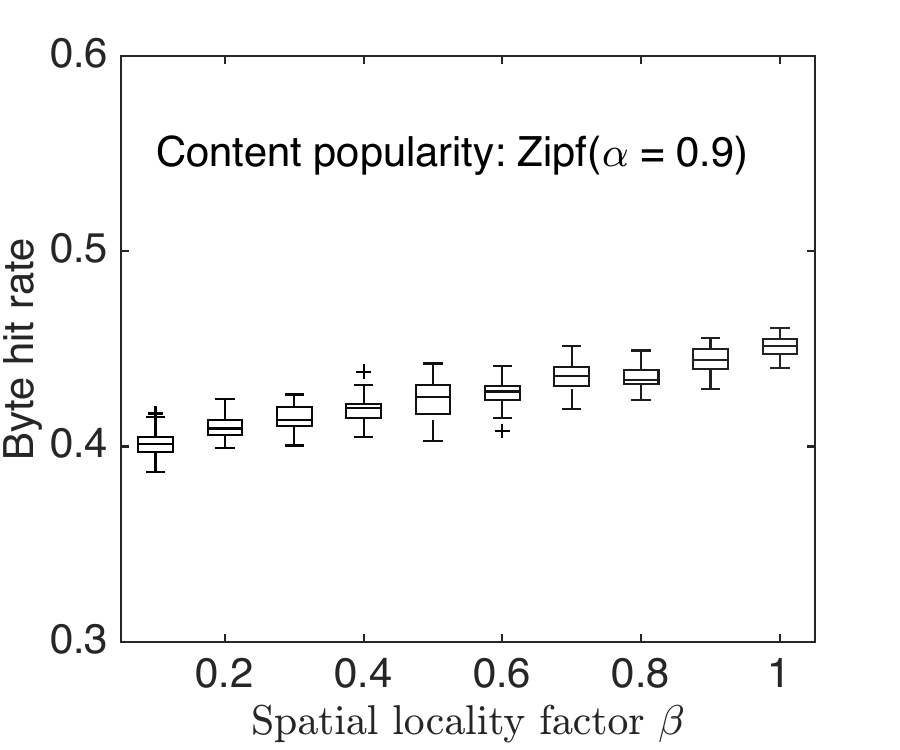}}
  \subfloat[FP reduction vs. spatial locality]{\label{fig:guifi:2}\includegraphics[width=4.5cm]{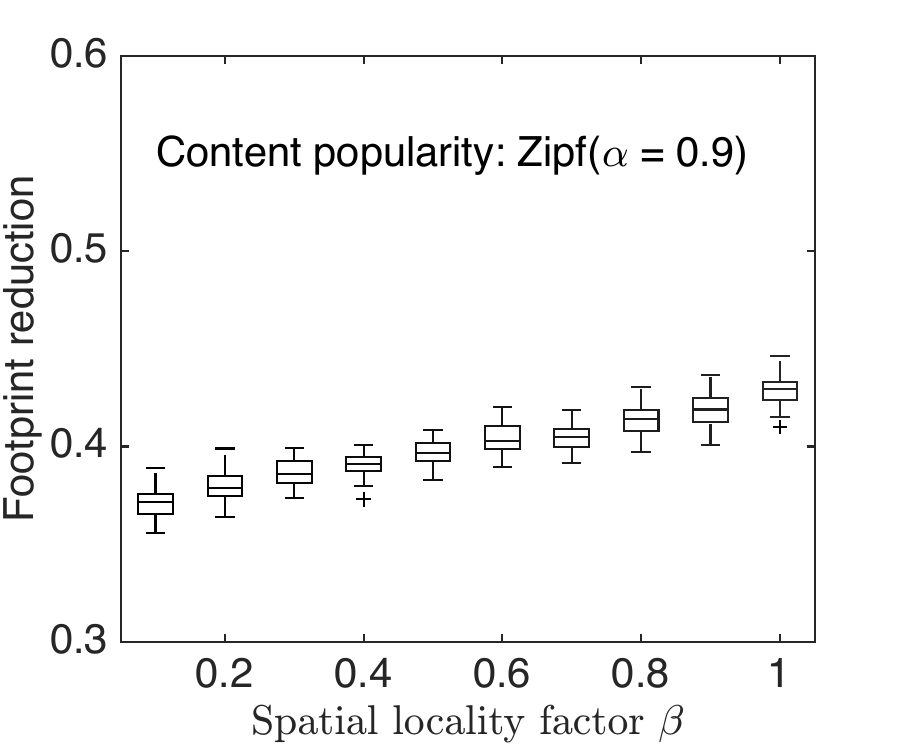}}
  \subfloat[Accuracy vs. popularity skew]{\label{fig:guifi:3}\includegraphics[width=4.5cm]{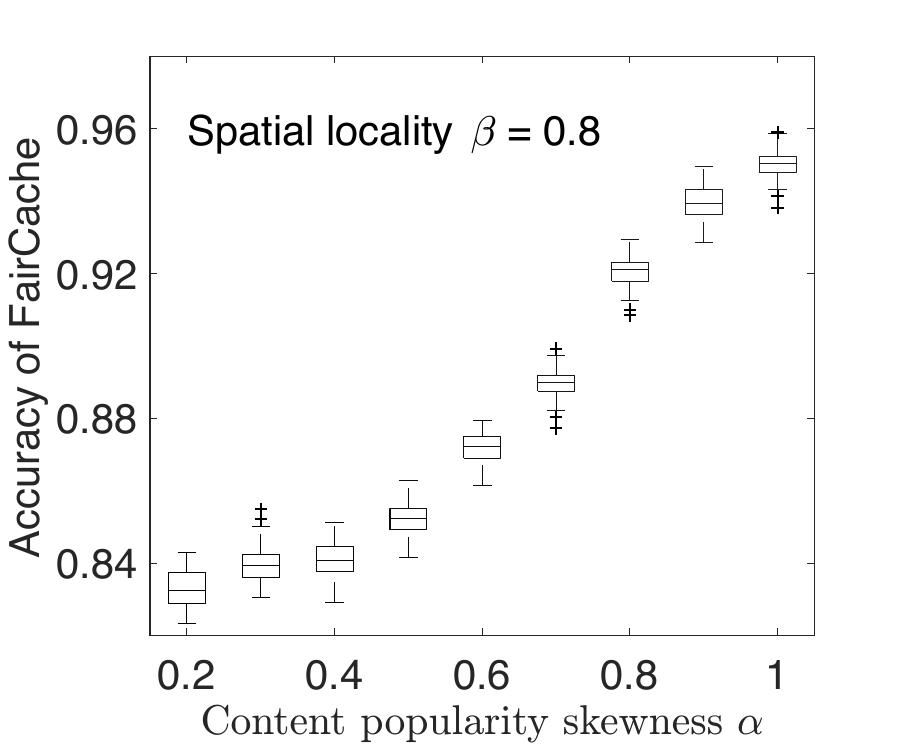}}
  \subfloat[Msg reduction vs. popularity skew]{\label{fig:guifi:4}\includegraphics[width=4.5cm]{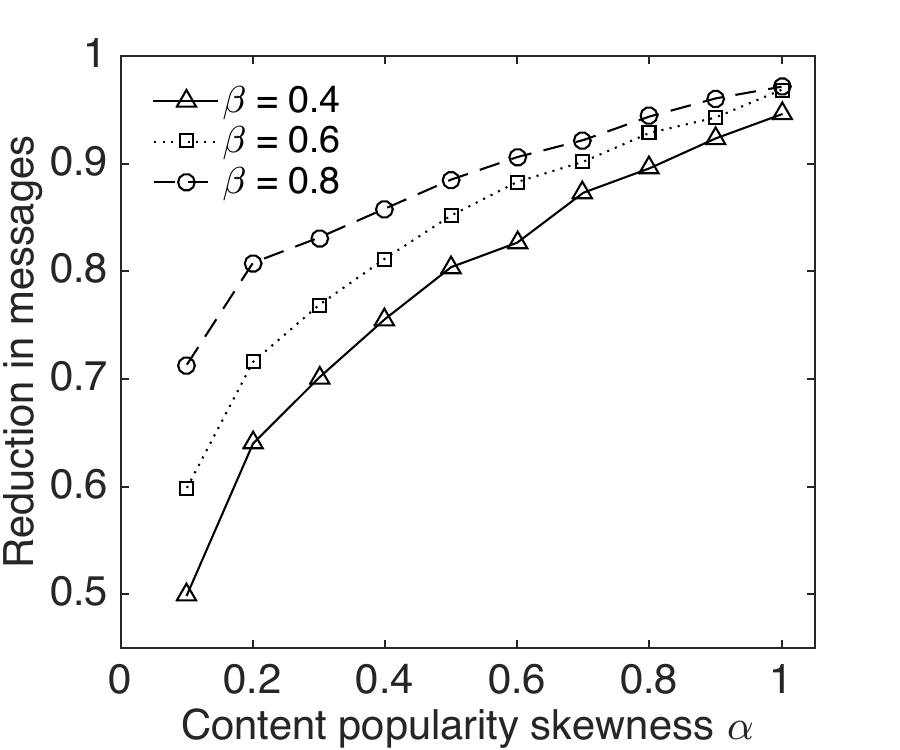}}
  \caption{Experiments on the Guifi network, 4~GB cache size. We vary both content popularity skewness $\alpha$ and spatial locality factor $\beta$ from $0.1$ to $1$. We observe a gradual and slow improvement in caching performance as spatial locality factor increases. Both spatial locality and content popularity skewness have significant impacts on the accuracy and the traffic reduction.}
  \label{fig:guifi}
\end{figure*}

FairCache's heuristics take advantage of highly skewed spatial and content popularity localities.
A natural question is how these localities impact the algorithm. To explore this, we perform sensitivity analysis across these two parameters to measure the robustness of our heuristics. Here, we solely present the Guifi topology due to space constraints. The reason we select Guifi is that the dataset contains geographic coordinates of each node, allowing much more fine grained analysis of spatial locality. We have confirmed that the results are representative of the other topologies.

We use a Hawkes process-based algorithm~\cite{Dabirmoghaddam:2014:UOC:2660129.2660143, Wang:2015:PUS:2810156.2810162} to generate a user request trace. The algorithm is controlled by two parameters: a \textit{content popularity skewness} $\alpha$ and a \textit{spatial locality factor} $\beta$. $\alpha$ controls the overall content popularity which follows \textit{Zipf}$(\alpha)$. The spatial locality factor, $\beta=0$, means  the request pattern reduces to an \textit{Independent Reference Model}; whilst $\beta=1 $ indicates very high spatial localisation (\ie requests for an object often occur in the same locale). 


First, we inspect their impact on the caching performance metrics. Figure~\ref{fig:guifi} presents the results by varying both $\alpha$ and $\beta$ in $(0,1]$. From Figure~\ref{fig:guifi:1} and \ref{fig:guifi:2}, we observe a shallow improvement on byte hit rate and footprint reduction as $\beta$ increases. Specifically, they increase by only $6\%$ and $8\%$ respectively when increasing $\beta$ from $0.1$ to $1$. This suggests that spatial locality is not a critical requirement for FairCache. 


On the other hand, the popularity skew, $\alpha$, has a more significant impact on the accuracy and message reduction of FairCache. Figure~\ref{fig:guifi:3} shows that the average accuracy of FairCache improves from $85\%$ to $97\%$ by increasing $\alpha$ from $0.2$ to $1$. The speed of degradation of accuracy by decreasing $\alpha$ also slows down at certain point ($\alpha=0.4$). The reason is because the general popularity distribution gets closer to a uniform distribution (due to a small $\alpha$). Thus, items are randomly requested, which means that each object has a similar utility when being cached. Interestingly, this means the overall utility of a cache will not vary much, though the solution can be quite different from the optimal one.


Last, we inspect the messaging overhead of running FairCache, presented as the reduction in comparison to the distributed optimal solution again. In Figure~\ref{fig:guifi:4}, we see that both $\alpha$ and $\beta$ have a notable impact. Higher $\alpha$ and $\beta$ both result in lower overheads (\ie higher reductions). The reason is that a smaller $\alpha$ value leads to a more uniform popularity distribution, which makes the demand matrices deviate more from each other, which further leads to larger exchanged messages for $\boldsymbol{\lambda}$ values. The smaller $\beta$ values have almost the same effect on demand matrices as that of $\alpha$. However, we also notice that $\beta$ has more significant impacts when $\alpha$ is small.


\subsection{Neighbourhood Size Distribution}
\label{sec:nbhd}

\begin{figure}[!htp]
  \centering 
  \subfloat[Avg. radius vs. avg. degree]{\label{fig:radius:1:1}\includegraphics[width=4.5cm]{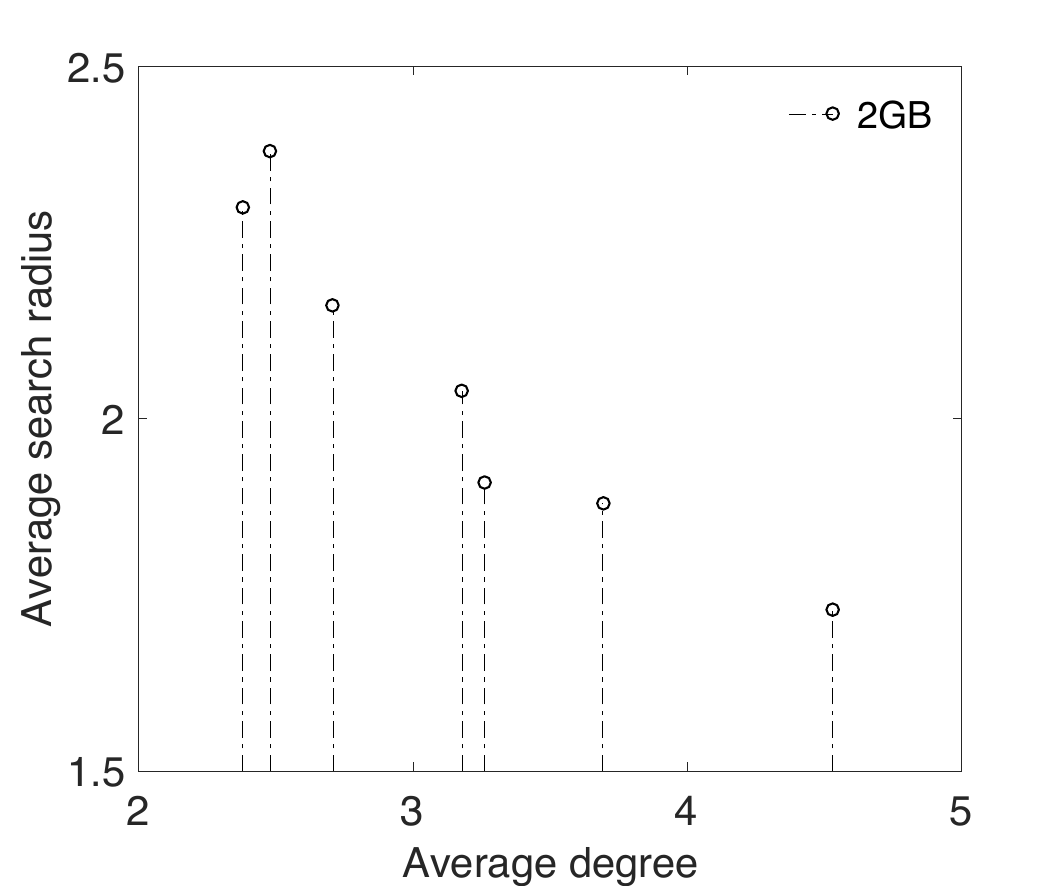}}
  \subfloat[Radius vs. cache size heatmap]{\label{fig:radius:1:2}\includegraphics[width=4.5cm]{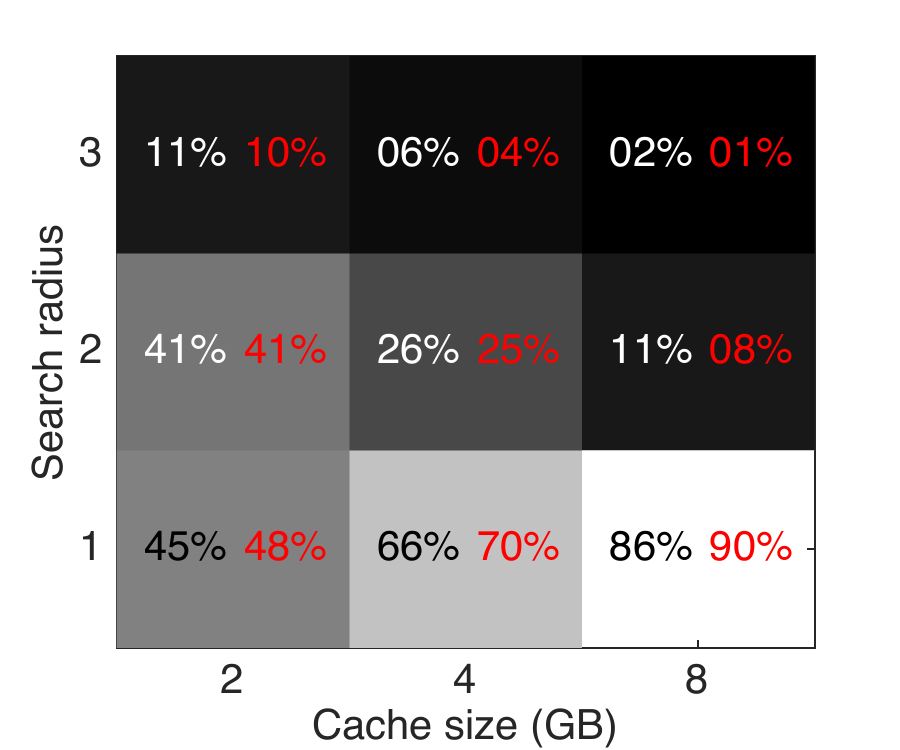}}
  \caption{The distribution of neighbourhood size (in terms of $r$) after FairCache converges. $(a)$ uses seven ISP topologies; there is a relatively strong negative correlation between nodes' average degree and their average neighbourhood size. $(b)$ uses Guifi topology with three cache configurations; the numbers in the grid show the percentage of the nodes and percentage of the control messages (in red).}
  \label{fig:radius:1}
\end{figure}

As previously stated in \S~\ref{sec:complexity}, maintaining some neighbourhoods is critical for ensuring scalability. This is because \one~it dictates the communication overhead; and \two~it justifies the effectiveness of the heuristic~\three. In this section, we present our empirical study on the neighbourhood size. We launch a number of experiments using the default setup, whilst varying the topology (as this is what dictates the neighbourhood size). We measure the neighbourhood sizes after FairCache converges (represented by the search radius $r$).

Figure \ref{fig:radius:1:1} plots the average search radius, $r$, as a function of the average degree of the topology. We use seven $r1$-level ISP topologies \highlight{(\ie ISP router-level topology with one-hop clients included)}, with a 2GB cache size. As the average degree increases, $r$ decreases, thereby reducing network overhead. This is because network density increases, it becomes less necessary to create multi-hop neighbourhoods. Most important is the fact that even with a low degree, the search radius is very small on all topologies (around 2 hops).

Figure \ref{fig:radius:1:2} presents results for experiments performed on the Guifi network. It shows a heatmap, which reveals the percentage of nodes that have a certain radius (across three cache sizes). Each point in the grid shows the percentage of nodes in a simulation that have a certain search radius (lighter colour means more), as well as the percentage of control messages generated within each neighbourhood (shown in red). Figure \ref{fig:radius:1:2} conveys two pieces of important information. First, most of the nodes end up with a neighbourhood of fewer than 3 hops. Even for the small 2GB cache sizes, $87\%$ of the nodes have no more than 2-hop search radius. Second, using larger cache size makes the distribution even more skewed, which further leads to even smaller search radius. When we increase the cache size from 2GB to 8GB, the percentage of nodes with a 1-hop radius increases from $45\%$ to $86\%$. With 8~GB cache, $97\%$ of nodes' final neighbourhoods are no more than 2 hops. In addition, we also provide the distribution of control messages in each $r$-hop neighbourhood, plotted in red colour in the same grid. As we can see, the traffic distribution is even more condensed within a very small neighbourhood, which indicates most interactions are between directly connected neighbours. We have confirmed that these results are mirrored across all other topologies.


\section{Discussion and Limitations}
\label{sec:discussion}

Deployment of FairCache raises a number of interesting questions. A key practical concern is the potential for parties to participate in FairCache in a malicious or non-cooperative manner. This is a possibility as nodes are expected to report shadow prices, which they could be manipulated. FairCache is not intended to force stakeholders to collaborate, or to protect against cheating; hence, we have assumed that all nodes adhere to the FairCache algorithm. However, if deployed, such complexities would need to be addressed. In current inter-domain network protocols this problem is handled using out-of-band trust establishment, alongside signing authorities to bind autonomous systems to trusted identities (\eg~RPKI\cite{Wahlisch:2015:RTS:2834050.2834102}). Equally, we envisage FairCache could rely on similar principles, in which legally formed (potentially transitive) collaboration agreements are underpinned by public key cryptography. 

There are also a number of alternative practical concerns. 
\highlight{We implicitly assumed that the demand matrix is stationary, whereas in reality the demand will change over time. Recall that the space complexity (in Section \ref{sec:complexity}) of storing an individual demand matrix in FairCache is $\Theta(|O'| |\overline{N}|)$, hence the complexity of updating the whole matrix is bounded by $\Theta(|O'| |\overline{N}|)$. Two other facts further help to reduce this overhead: \one~the spatial locality we have mentioned; and \two~the various research \cite{Cha:2007:ITY:1298306.1298309} that has reported that such changes are gradual. Hence we can incrementally update the demand matrix to avoid unnecessary traffic. Furthermore, such occasional and incremental updates do not necessarily need to be synchronised especially when FairCache is deployed as an ever-running background process. In theory, the stale information may slow down the convergence in an iterative optimisation process. In practice, less-frequent and incremental updates can ameliorate such impacts\cite{boyd2004convex}. Moreover, due to the nature of FairCache algorithm, the updates mostly affect the local neighbourhood and their cascading effects will drastically decrease out of the neighbourhood.}

\gchange{Although we have focussed on building an efficient algorithm, we also acknowledge that FairCache treats storage as the key bottleneck. There are also a number of other constraints that could be included~\cite{perino2011reality}. For instance, hardware bottlenecks can render servers useless even whilst in possession of content (\eg CPU, I/O bus, congestion collapse). Thus, deployment would probably involve the introduction of such considerations into our model of fairness and utility. This could, for example, result in caches actively storing the \emph{same} object in an attempt to share heavy load.} Another simplifying assumption is the modelling of delay using hop count (like BGP); whereas this is a useful abstraction, it does not consider the variability introduced by realtime congestion. We consider this an acceptable sacrifice, as introducing such realtime considerations would introduce burdensome overheads. Another point worth highlighting is that we base caching decisions on per-object popularity counts, therefore introducing greater memory overheads than algorithms like Least Recently Used. We emphasise, however, that our heuristic removes all unpopular content, making such counts highly feasible.

\section{Related Work}
\label{sec:related}

There are three key related areas of work: collaborative caching, content delivery networks (CDNs) and game theoretical studies of caching. Collaborative in-network caching has been proven as an effective methodology to improve system performance in various contexts~\cite{Dahlin:1994:CCU:1267638.1267657, Fan:2000:SCS:343571.343572, 5062201, 6566743, Chai:2012:CLM, Psaras:2012:PIC:2342488.2342501, 6195634, borst:DistributedCaching_INFOCOM2010}, even though edge caching has also been shown to be effective~\cite{Fayazbakhsh:2013:LPM}.
Previously proposed solutions are either limited by a centralised solver \cite{Dahlin:1994:CCU:1267638.1267657, Chun:2004:SCD} which makes scalability difficult, or limited by distributed heuristics \cite{5062201, Chai:2012:CLM, Psaras:2012:PIC:2342488.2342501, 6566743, 6195634, borst:DistributedCaching_INFOCOM2010, Fan:2000:SCS:343571.343572}, which neither guarantees a global optimum nor Pareto efficiency. FairCache is most related to the latter in that we do not guarantee a global optimum; however, we build on their contributions by introducing the concept of fairness and ensuring Pareto efficiency. Importantly, we also reveal the need for fairness to encourage engagement by cache operators.
\highlight{Recently, fair cache sharing also attracts enough attention in cloud computing and system research, \eg \cite{Pu:2016:FNF:2930611.2930637} proposes FairRide using blocking to achieve isolation-guarantee and strategy-proofness properties.}

The current solution used for Internet-scale content delivery are CDNs. They hold many similarities to ICNs~\cite{Fayazbakhsh:2013:LPM}, however, unlike our proposal, they are not collaborative entities. Typically, they are operated by distinct companies that deploy independent infrastructures. Some, like Akamai, sell their capacity to third party content providers (arguably a form of collaboration), whilst others build dedicated infrastructures for their own content (\eg Google, Facebook, Netflix). Recent work within the IETF has endeavoured to support inter-CDN cache sharing~\cite{rfc7336}, however, this only provides protocol support, rather than algorithms to decide when, where and how caches should be shared. Hence, our work is orthogonal, and could be applied to CDNs \highlight{as a recent proposal in \cite{7218624}}.

Game theory is an effective tool to analyse the effects of individual behaviours in a complex system;
\eg prior work \cite{Key:2011:PSM:1866739.1866762} analyses the fairness achieved in bandwidth allocation by coordinated and uncoordinated rate control over multiple links in peer-to-peer networks.
Recent work~\cite{Chun:2004:SCD, Pacifici:2011:scr, Pacifici:2011:cntp, 1717403, 5686876, 7524445} applies game theory to study in-network caching. In these papers, the caching problem is modelled as non-cooperative, pure strategic games and the equilibrium is analysed. Unlike us, these work take a system-level utilitarian approach that aims to achieve a global optimum. In contrast, we focus on attaining fairness amongst nodes. More related to us is \cite{Pacifici:2011:scr, 1717403, Chun:2004:SCD}, which look at how selfishness drives nodes to act. These studies show how selfishness impact the equilibrium and efficiency in cache systems (measured by the \textit{Price of Anarchy}). They also show that the global optimum is seldom achieved due to lack of coordination and nodes' inherent selfishness. Again, fairness is overlooked though; we introduce this as an integral requirement of cooperation. 

To the best of our knowledge, no prior work in ICN has tried to solve the collaborative caching as a bargaining game and has devised a low-complexity heuristic to embrace both efficiency and fairness.

\section{Conclusion}
\label{sec:conclusion}



To date, studies of collaborative ICN caching have focussed on traditional metrics such as hit rate, assuming that nodes are happy to contribute to achieving a global optimum. In this paper, we have argued that practical situations are unlikely to adhere to this model. Instead, caches operated by separate stakeholders will expect a reasonable level of \emph{fairness}, where they are not penalised for cooperating with others. We began by delineating an optimal solution, which ensures no node attains lower utility by collaborating. To address its high complexity, we have also proposed a heuristic algorithm, FairCache, which we have shown achieves high performance at a fraction of the cost. Unlike past work, FairCache offers Pareto efficiency and proportional fairness, ensuring that \emph{all} nodes are incentivised to collaborate.
As well as helping to promote cooperation, our results show that proportional fairness plays a key role in balancing network traffic too. It helps maintain more hits at the edge, rather than globally optimal solutions that centralise hits in the core. We are not prescriptive in how FairCache is deployed and have ensured that it can be used either globally or amongst a subset of collaborating nodes. Hence, our key take-home message is that future collaborative caching designs should cease to assume purely altruistic cooperation and, instead, be explicitly built around the concept of fairness.




There are several lines of potential future work. First, we plan to build a wireline protocol to implement FairCache's design. The most prominent challenge in this regard is implementing a FairCache protocol that is robust against cheating nodes. This is a fascinating area of future work; currently, we assume trusted certified parties that do not lie, however, we wish to expand this to cover more dynamic arrangements in which trust can be formed on-the-fly. Note that this would not necessarily involve significant changes to FairCache --- simply augmentary functions, \eg key exchange. \gchange{Clearly, this should be underpinned by a hardware implementation for exploring practical feasibility at line rates.}
There are various other real-world concerns that could also be integrated into FairCache too. For instance, dynamics regarding link availability, congestion and request patterns could be explored. This should extend to integrating new constraints (\eg bandwidth, power, CPU), \gchange{as well as alternate forms of fairness (\eg bandwidth fairness, user-centric fairness).} Lastly, we wish to expand FairCache to consider situations in which caches have external influences (\eg business arrangements) that modify their behaviours. As of yet, little work has considered exogenous incentives that drive caching collaboration. We therefore see this as a fruitful line of study.



\appendix
\label{apx:proof}
 
 Let $T: \mathbb{R}^n \rightarrow \mathbb{R}^n$ be the sorting operator used in \cite{bertsimas2011price}. More precisely, that is, we say $\mathbf{U}$ is lexicographically larger than or equal to $\mathbf{U}'$ if we can write $\mathbf{U} \succeq_\mathrm{lex} \mathbf{U}' \iff U_i \geq U'_i, \forall i \in [0, n)$ after applying $T$ to both $\mathbf{U}$ and $\mathbf{U}'$. $\succ_\mathrm{lex}$ can be similarly defined. If we let $\mathbf{U}^*$ be the corresponding utilities achieved by using $(\mathbf{x}^*,\mathbf{y}^*)$ which fulfils a certain well-defined fairness. Let $U_i^{w}$ denote the minimum utility of $v_i$ of a given solution. Technically, we have the following definitions of the two fairness metrics besides the previously defined Proportional Fairness (PF).
 
 \begin{mydef}
\label{def:effair}
Egalitarian Fairness (EF): $(\mathbf{x}^*,\mathbf{y}^*)$ is EF iff $\forall (\mathbf{x},\mathbf{y}) \neq (\mathbf{x}^*,\mathbf{y}^*) \Rightarrow$ the following two conditions cannot be both true at the same time. (1) $\exists i,$ s.t. $U_i \geq U^*_i$; (2) $\forall i,j$ s.t. $U_i -U_i^{w} = U_j - U_j^{w}$.
\end{mydef}

\begin{mydef}
\label{def:mffair}
Max-Min Fairness (MF): $(\mathbf{x}^*,\mathbf{y}^*)$ is MF iff  $\forall (\mathbf{x},\mathbf{y}) \neq (\mathbf{x}^*,\mathbf{y}^*) \Rightarrow T(\mathbf{U}^*) \succeq_\mathrm{lex} T(\mathbf{U})$.
\end{mydef}
 
 These definitions are followed by the theorems below, both of which have rather straightforward proofs as we will show in the following.
  
\begin{mythm}
\label{thm:4}
In a fair collaborative game $(\Omega,u^0)$, the optimal caching strategy $(\mathbf{x}^*,\mathbf{y}^*)$ achieves PF.
\end{mythm}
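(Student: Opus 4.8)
The plan is to move everything into utility space and exploit that the Nash bargaining solution $(\mathbf{x}^*,\mathbf{y}^*)$ is, by construction, the maximizer of $F(\mathbf{x},\mathbf{y}) = \sum_{v_i \in V} \ln(U_i - u_i^0)$ over the convex feasible region cut out by constraints \eqref{eq:cache}--\eqref{eq:int1}. Lemma~\ref{thm:0} already gives convexity of this problem, and since each $\ln(\cdot)$ is strictly concave in its own utility argument while the $U_i$ are affine in $(\mathbf{x},\mathbf{y})$ by \eqref{eq:util}, the optimal utility vector $u^*=(u_1^*,\dots)$ is \emph{unique}. The first thing I would record is that, because the $U_i$ are affine, the segment $\big((1-t)\mathbf{x}^*+t\mathbf{x},\,(1-t)\mathbf{y}^*+t\mathbf{y}\big)$ stays feasible for $t\in[0,1]$ and induces utilities $u_i(t) = (1-t)u_i^* + t\,u_i$ that are exactly affine in $t$, with $u_i'(t)=u_i-u_i^*$.

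Next I would introduce the scalar function $g(t)=\sum_{v_i\in V}\ln(u_i(t)-u_i^0)$ and differentiate at $t=0$, which yields $g'(0)=\sum_{v_i\in V}\frac{u_i-u_i^*}{u_i^*-u_i^0}$. This is \emph{precisely} the left-hand side of the inequality in Definition~\ref{def:propfair}, so the theorem reduces to showing $g'(0)<0$ for every feasible $(\mathbf{x},\mathbf{y})\neq(\mathbf{x}^*,\mathbf{y}^*)$. I would then invoke first-order optimality: since $(\mathbf{x}^*,\mathbf{y}^*)$ maximizes $F$ over a convex set and $g(t)$ attains its maximum on $[0,1]$ at the endpoint $t=0$, the one-sided derivative satisfies $g'(0)\le 0$. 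This already delivers the non-strict ($PF$ up to $\le 0$) version cleanly.

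The strict inequality is where the real work lies, and I expect it to be the main obstacle. First-order optimality alone gives only $g'(0)\le 0$, and the strict concavity of $g$ (valid whenever $u\neq u^*$) together with uniqueness of $u^*$ still leaves room for $g'(0)=0$ along any direction tangent to a flat portion of the Pareto frontier $\Omega^e$. To close this gap I would lean on the fact recorded when $\Omega^e$ is introduced, namely that the frontier is a concave (indeed strictly concave) function with compact convex domain: this makes $u^*$ an exposed point of $\Omega$, so the supporting hyperplane at $u^*$ with normal $\nabla F(u^*)=\big(1/(u_i^*-u_i^0)\big)_{i}\succ 0$ meets $\Omega$ only at $u^*$. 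Under that condition $g'(0)=\nabla F(u^*)\cdot(u-u^*)<0$ strictly for all $u\neq u^*$, completing the argument.

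Concretely, I would present the proof in the order: (1) reduce to utility space and note affineness of $U_i$; (2) compute $g'(0)$ and identify it with the $PF$ sum; (3) derive $g'(0)\le 0$ from optimality over the convex feasible set; (4) upgrade to $<0$ using strict concavity of $F$ together with the strict concavity of $\Omega^e$ to exclude tangential feasible directions. I would devote the most care to step~(4), since the statement as written requires the strict inequality and this is exactly the point where a purely first-order argument is insufficient; the cleanest fallback, should the exposed-point property need qualification, is to establish the claim for $u$ on the Pareto frontier and then extend to dominated interior points, which only strengthen the inequality because $\nabla F(u^*)\succ 0$.
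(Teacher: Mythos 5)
Your steps (1)--(3) follow essentially the same route as the paper's own proof: the paper likewise reduces $PF$ to the first-order optimality condition of $f(u)=\sum_{v_i\in V}\ln(u_i-u_i^0)$ at the Nash bargaining maximizer, and your identification of the $PF$ sum with the directional derivative $g'(0)$ along a feasible segment is a cleaner and more rigorous way of obtaining the summed inequality. Indeed, up to the non-strict conclusion $g'(0)\le 0$ your argument is sounder than the printed one, which asserts the impossible stationarity condition $\nabla f^*=0$ (impossible since $f$ is coordinate-wise strictly increasing, so its maximum over $\Omega$ lies on the boundary) and then deduces the per-coordinate strict inequalities $\frac{1}{u_i^*-u_i^0}<\frac{1}{u_i-u_i^0}$, i.e.\ $u_i<u_i^*$ for \emph{every} node and \emph{every} alternative strategy --- which is false whenever an alternative improves some node (e.g.\ the global strategy in the paper's own introductory example).

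The genuine gap is your step (4), and it cannot be closed the way you propose. You attribute to the paper the claim that $\Omega^e$ is \emph{strictly} concave; the paper says only ``concave'', and in this model strict concavity is in fact false: by eq.~(\ref{eq:util}) the utilities are affine in $(\mathbf{x},\mathbf{y})$ and the constraints (\ref{eq:cache})--(\ref{eq:int1}) are linear, so $\Omega$ is the affine image of a polytope, hence itself a polytope, and its Pareto frontier is piecewise linear. Consequently $u^*$ need not be an exposed point of the supporting hyperplane with normal $\nabla F(u^*)$: that hyperplane may contain an entire face of $\Omega$, and for any feasible $u\neq u^*$ on that face one gets $g'(0)=\nabla F(u^*)\cdot(u-u^*)=0$, not $<0$. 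Concretely, with two nodes, $u^0=(0,0)$ and frontier $u_1+u_2=2$, the bargaining solution is $u^*=(1,1)$, and the alternative $u=(3/2,\,1/2)$ yields $\sum_i (u_i-u_i^*)/(u_i^*-u_i^0)=0$. The same zero value occurs for any strategy $(\mathbf{x},\mathbf{y})\neq(\mathbf{x}^*,\mathbf{y}^*)$ that induces the \emph{same} utility vector as $(\mathbf{x}^*,\mathbf{y}^*)$ --- a situation that genuinely arises in the caching model, e.g.\ when a node can retrieve an object from either of two equidistant neighbors that both cache it. So the strict inequality demanded by Definition~\ref{def:propfair} is not provable by your exposed-point argument (nor by your fallback of handling the frontier first, since the problematic points lie \emph{on} the frontier); what your steps (1)--(3) correctly establish is the standard non-strict form $\sum_i (u_i-u_i^*)/(u_i^*-u_i^0)\le 0$. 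Note that the paper's own proof does not overcome this either --- its per-coordinate strict step is invalid --- so the defect you ran into is inherited from the strictness built into the definition rather than from your approach.
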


\begin{mythm}
\label{thm:5}
In a fair collaborative game $(\Omega,u^0)$ with optimal strategy $(\mathbf{x}^*,\mathbf{y}^*)$, EF is sufficient for MF, \ie EF $\Rightarrow$ MF.
\end{mythm}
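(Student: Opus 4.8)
The plan is to fix the performance metric in the definition of $MF$ to be the collaboration gain $g(v_i) = u_i - u_i^{w}$, which is exactly the quantity both fairness notions compare, and then to convert the egalitarian structure into a max-min optimality statement by invoking the Pareto efficiency of the Nash bargaining solution. The whole argument rests on one observation: when all coordinates of a vector are equal, you cannot raise its minimum without raising every coordinate, and raising every coordinate is precisely a strict Pareto improvement, which $(\mathbf{x}^*,\mathbf{y}^*)$ cannot admit.

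First I would unpack the $EF$ hypothesis. Since $(\mathbf{x}^*,\mathbf{y}^*)$ achieves egalitarian fairness, there is a common constant $\Delta$ with $u_i^* - u_i^{w} = \Delta$ for every $v_i \in V$. Hence $g(v_i) = \Delta$ for all $i$ at the optimal strategy, so $\min_{v_i \in V} g(v_i) = \Delta$. This ``all gains equal'' structure is the essential leverage, and I would emphasize that it is exactly the $EF$ assumption that upgrades ``improve the worst node'' into ``improve all nodes simultaneously''; without it the implication would fail.

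Next I would invoke Pareto efficiency. Because $(\mathbf{x}^*,\mathbf{y}^*)$ is the Nash bargaining solution of the fair collaborative game, it lies on the Pareto frontier $\Omega^e$ and is therefore Pareto optimal. I would then argue by contradiction: suppose $MF$ fails, i.e. some feasible $(\mathbf{x}',\mathbf{y}')$ attains $\min_{v_i \in V}(u_i' - u_i^{w}) > \Delta$. Then $u_i' - u_i^{w} > \Delta = u_i^* - u_i^{w}$, and since $u_i^{w}$ is a fixed constant (coinciding with the disagreement value $u_i^0$, so the shift does not affect the Pareto ordering), this gives $u_i' > u_i^*$ for every $v_i \in V$ at once. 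That is a strict Pareto improvement over $(\mathbf{x}^*,\mathbf{y}^*)$, contradicting its Pareto optimality. Therefore no feasible strategy beats $\Delta$ on the minimum, so $(\mathbf{x}^*,\mathbf{y}^*) \in {\arg\max\min}_{\mathbf{x},\mathbf{y}}\, g(v_i)$ and $MF$ holds.

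I do not expect a genuinely hard obstacle here; the only real subtlety is bookkeeping around which metric $g$ makes the statement meaningful and making explicit that the minimum of an equal-coordinate vector can only be raised by a strong Pareto improvement. Once $g(v_i)=u_i-u_i^{w}$ is fixed and the $EF$ equalization is recorded, the result is essentially a one-line consequence of the Pareto optimality already guaranteed by the bargaining framework, and the remaining work is just confirming feasibility is preserved and that $u_i^{w}=u_i^0$ so that the gains and utilities share the same Pareto structure.
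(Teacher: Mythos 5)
Your proof is correct and follows essentially the same route as the paper's: both argue by contradiction that, because $EF$ equalizes all gains $u_i^* - u_i^{w}$, any strategy with a strictly larger minimum gain must strictly improve \emph{every} node's utility, which contradicts the Pareto optimality guaranteed by the Nash bargaining solution. If anything, your version is marginally tidier, since you only require the competing strategy to be feasible with a better minimum, whereas the paper additionally (and unnecessarily) asserts that the competing max-min fair strategy is itself Pareto optimal.
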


In our caching games, \textit{PF} is naturally guaranteed by NBS as Theorem~\ref{thm:4} states. It is also intuitively easy to understand that if a Pareto efficient solution exists and achieves \textit{EF}, it also achieves \textit{MF} at the same time, as Theorem~\ref{thm:5} states.

 \subsection{Proof of Theorem \ref{thm:4}.}
 
 \begin{proof}
   Because $(\mathbf{x}^*,\mathbf{y}^*)$ is the optimal caching solution, namely $(\mathbf{x}^*,\mathbf{y}^*) = {\arg \max}_{\mathbf{x},\mathbf{y}} \sum_{v_i \in V} \ln (U_i - u^0_i)$. Let $f(\mathbf{U}) = \sum_{v_i \in V} \ln (U_i - u^0_i)$. For $f(\mathbf{U})$ to reach its maximum, the necessary and sufficient first order condition is $\nabla f^* = 0$. $\forall (\mathbf{x},\mathbf{y}) \neq (\mathbf{x}^*,\mathbf{y}^*) \Rightarrow \exists \boldsymbol{\lambda} \succ 0$ such that $\lambda_i^{-1}=U_i-u_i^0>0$. Then $\forall v_i \in V$ we have 
 \begin{align*}
   \nabla f^* - \boldsymbol{\lambda} \prec 0 & \Longrightarrow \frac{\partial f^*}{\partial U^*_i} - \lambda_i < 0 \\
   & \Longrightarrow \frac{1}{U^*_i - u^0_i} - \lambda_i < 0 \\
   & \Longrightarrow \frac{\lambda_i^{-1}}{U^*_i - u^0_i} - \frac{U^*_i - u^0_i}{U^*_i - u^0_i} < 0
 \end{align*}
 Sum over all the $v_i \in V$, we have
 \begin{align*}
   & \sum_{v_i \in V} \frac{(U_i - u^0_i) - (U^*_i - u^0_i)}{U^*_i - u^0_i} < 0 \Longrightarrow \sum_{v_i \in V} \frac{U_i - U^*_i}{U^*_i - u^0_i} < 0  
 \end{align*}
 By definition \ref{def:propfair}, strategy $(\mathbf{x}^*,\mathbf{y}^*)$ is proportionally fair.
 \end{proof}

 \subsection{Proof of Theorem \ref{thm:5}.}
 
 \begin{proof}
 Instead of directly proving that "\textit{EF} is sufficient for \textit{MF} in a fair collaborative game $(\Omega,u^0)$", we first prove that "\textit{EF} is sufficient for \textit{KS} fairness". Recall that \textit{KS} optimises the players of the worst utility, hence \textit{KS} fairness refers to the strategy that maximises the minimum utility in a game.

We prove the theorem by contradiction. Let's assume solution $(\mathbf{x}^*,\mathbf{y}^*)$ is egalitarian fair, but not \textit{KS} fair. $\mathbf{U}^*$ is the corresponding utility value.
 
 Let's further assume another solution $(\mathbf{x}',\mathbf{y}') \neq (\mathbf{x}^*,\mathbf{y}^*)$ which achieves \textit{KS} fairness, and $\mathbf{U}'$ is its utility value.
 In a fair collaborative game, based on the nature of Nash bargaining framework, both $(\mathbf{x}',\mathbf{y}')$ and $(\mathbf{x}^*,\mathbf{y}^*)$ are Pareto optimal.
 
 By definition, \textit{KS} fair solution indicates that
 \begin{align}
   & \min \{U'_i - U_i^{w}, ...\} > \min \{U^*_i - U_i^{w}, ...\}, \quad \forall v_i \in V \label{eq:pf:1}
 \end{align}
 By definition, egalitarian fair solution indicates that
 \begin{align}
   & \min \{ U^*_i - U_i^{w}, ...\} = U^*_i - U_i^{w} = U^*_j - U_j^{w}, \forall v_i, v_j \in V \label{eq:pf:2}
 \end{align}
 (\ref{eq:pf:1}), (\ref{eq:pf:2}) $\Longrightarrow$
 \begin{align}
   & U'_i - U_i^{w} \geq U^*_i - U_i^{w}, \quad \forall v_i \in V \label{eq:pf:5} \\
   & U'_i - U_i^{w} > U^*_i - U_i^{w}, \quad \exists v_i \in V \label{eq:pf:3}
 \end{align}
 Inequality (\ref{eq:pf:3}) contradicts with the fact that $(\mathbf{x}^*,\mathbf{y}^*)$ is Pareto optimal. So the assumption does not hold. $(\mathbf{x}^*,\mathbf{y}^*)$ must be both egalitarian fair and \textit{KS} fair. 
 
Because $(\mathbf{x}^*,\mathbf{y}^*)$ is already a Pareto optimal solution, both \textit{MF} and \textit{KS} are the same. Therefore, in a fair collaborative game $(\Omega,u^0)$, \textit{EF} is sufficient for \textit{MF}.
I.e., $EF \Rightarrow MF$.
 \end{proof}

 \subsection{Proof of Lemma \ref{thm:0}.}
 
 \begin{proof}
   The proof is trivial. Since $U_i$ in eq.~(\ref{eq:util}) is affine and positive, the non-negative weighted sum of $U_i$ is still affine and positive. All the affine functions are log-concave. So the objective function (\ref{eq:nash:max}) is concave.
 
   In addition, all (\ref{eq:cache})(\ref{eq:dst})(\ref{eq:fsb})(\ref{eq:int2}) and (\ref{eq:int1}) are defined over a set of compact and convex constraints. Therefore, problem (\ref{eq:nash:max}) is a convex optimisation problem.
 \end{proof}

 \subsection{Proof of Theorem \ref{eq:thm:kkt}.}
 
 \begin{proof}
 Obviously caching decision space $[0,1] \subset \mathbb{R}_+$ is a nonempty, compact and convex set. Since the objective function (\ref{eq:nash:max}) is a continuously differentiable concave function, and all the constraints on the variables are affine, Karush-Kuhn-Tucker (KKT) conditions are necessary and sufficient for the existence of an optimal solution.
 
 To derive the optimum of a function with constraints, we first derive the Lagrangian $\mathcal{L}(\cdot)$ of eq.~(\ref{eq:nash:max}). Let $\boldsymbol{\alpha} \succeq 0$, $\boldsymbol{\beta} \succeq 0$, $\boldsymbol{\gamma} \succeq 0$, $\boldsymbol{\delta} \succeq 0$ and $\boldsymbol{\lambda} \succeq 0$ be the KKT multipliers associated with constraints. Their subscripts are self-explained by the corresponding constraints associated with. Then we have
 \begin{align*}
   & \mathcal{L}(\mathbf{x}, \boldsymbol{\lambda}, \boldsymbol{\alpha}, \boldsymbol{\beta}, \boldsymbol{\gamma}, \boldsymbol{\delta}) = \\
   &\sum_{v_i \in V} \ln ( U_i - u_i^0 ) - \sum_{v_i \in V} \sum_{v_j \in N_i} \sum_{o_k \in O} \lambda_{i,j,k} (y_{i,j,k} - x_{j,k}) \\
   & - \sum_{v_i \in V} \alpha_{i} (\sum_{o_k \in O} x_{i,k} - C_i) - \sum_{v_i \in V} \sum_{o_k \in O} \beta_{i,k} (\sum_{v_j \in N_i} y_{i,j,k} - 1) \\
   & - \sum_{v_i \in V} \sum_{o_k \in O} \gamma_{i,k} (x_{i,k} - 1) + \sum_{v_i \in V} \sum_{v_j \in N_i} \sum_{o_k \in O} \delta_{i,j,k} y_{i,j,k}
 \end{align*}
 Note we dropped constraints $x_{i,j} \geq 0$ and $y_{i,j,k} \leq 1$ in making the Lagrangian because constraints (\ref{eq:dst}) and (\ref{eq:fsb}) make them redundant. In the following derivation, we let $\tau_{i,k} = U_i - u_i^0 - w_{i,k} x_{i,k}$ and $\tau_{i,k}' = U_i - u_i^0 - \frac{w_{i,k}}{l_{i,j}} y_{i,j,k}$ for the simplicity of representation. For the objective function to reach its optimum, first order necessary and sufficient conditions are
 \begin{align*}
   & \nabla \mathcal{L}(\mathbf{x}, \boldsymbol{\lambda}, \boldsymbol{\alpha}, \boldsymbol{\beta}, \boldsymbol{\gamma}, \boldsymbol{\delta}) = 0 \\
   & \Longleftrightarrow  \frac{\partial \mathcal{L}}{\partial x_{i,k}} = 0, \forall v_i, v_j \in V, \forall o_k \in O \\
   & \Longleftrightarrow \frac{w_{i,k}}{U_i - u_i^0} + \sum_{v_j \in N^+_i} \lambda_{j,i,k} - \alpha_i -\gamma_{i,k} = 0 \\
   & \Longleftrightarrow x^*_{i,k} = \frac{1}{\alpha_i + \gamma_{i,k} - \sum_{v_j \in N^+_i} \lambda_{j,i,k}} - \frac{\tau_{i,k}}{w_{i,k}}
 \end{align*}
 with complementary slackness
 \begin{align}
 \label{eq:kkt:sys}
 \begin{cases}
    \lambda_{i,j,k} (y_{i,j,k} - x_{j,k}) = 0, & \forall v_i, v_j \in V, \forall o_k \in O \\
    \alpha_{i} (\sum_{o_k \in O} x_{i,k} - C_i) = 0, & \forall v_i \in V, \forall o_k \in O \\
    \beta_{i,k} (\sum_{v_j \in N_i} y_{i,j,k} - 1) = 0, & \forall v_i \in V, \forall o_k \in O \\
    \gamma_{i,k} (x_{i,k} - 1) = 0, & \forall v_i \in V, \forall o_k \in O \\
    \delta_{i,j,k} y_{i,j,k} = 0, & \forall v_i, v_j \in V, \forall o_k \in O
 \end{cases}
 \end{align}
 
 Similarly, we can derive the optimal $y^*_{i,j,k}$ as $x^*_{i,k}$. The optimal caching strategy $(\mathbf{x^*},\mathbf{y^*})$ of the network can be derived by solving the equation system (\ref{eq:kkt:sys}) for all the nodes.
 \end{proof}

 \subsection{Proof of Theorem \ref{eq:thm:converge}.}
 
 \begin{proof}
 
 To prove convergence, we first prove the gradient of the dual function is bounded by a constant $K$, namely the dual function $d(\boldsymbol{\lambda})$ is K-Lipschitz continuous. Second, we show that given the diminishing step size, the Euclidean distance between the optimum $d(\boldsymbol{\lambda}^*)$ and the best value $d(\boldsymbol{\lambda}^{\circ})$ achieved in all previous iterations converges to zero in limit.
 
 Since the primal (\ref{eq:nash:9}) is strictly convex and all constraints are linear, dual $d(\boldsymbol{\lambda})$ is strictly concave and differentiable.
 \begin{align}
   & \frac {\partial d(\boldsymbol{\lambda})}{\partial \lambda_{i,j,k}} = y_{i,j,k} - x_{i,k} \Longrightarrow \left| \frac {\partial d(\boldsymbol{\lambda})}{\partial \lambda_{i,j,k}} \right| \leq 1
 \end{align}
 By Mean value theorem, there exists $\mathbf{c} \in (\boldsymbol{\lambda}, \boldsymbol{\lambda}')$ such that
 \begin{align}
   & d(\boldsymbol{\lambda}) - d(\boldsymbol{\lambda}') = \nabla d(\mathbf{c})^T (\boldsymbol{\lambda} - \boldsymbol{\lambda}')
 \end{align}
 By Cauchy--Schwarz inequality, let $n = |O| \times |V|^2$, we have
 \begin{align}
   \| d(\boldsymbol{\lambda}) - d(\boldsymbol{\lambda}') \|_2 & = \| \nabla d(\mathbf{c})^T (\boldsymbol{\lambda} - \boldsymbol{\lambda}') \|_2 \\
   & \leq \| \nabla d(\mathbf{c}) \|_2 \| \boldsymbol{\lambda} - \boldsymbol{\lambda}' \|_2 \\
   & \leq \sqrt{n} \| \boldsymbol{\lambda} - \boldsymbol{\lambda}' \|_2
 \end{align}
 $\| \cdot \|_2$ above denotes the Euclidean norm. Therefore, $d(\boldsymbol{\lambda})$ is K-Lipschitz continuous and Lipschitz constant $K = \sqrt{n}$. Let $\boldsymbol{\lambda}^*$ denote the maximiser of dual function $d(\boldsymbol{\lambda})$, then
 \begin{align}
   & \| \boldsymbol{\lambda}^{(t+1)} - \boldsymbol{\lambda}^* \|_2^2 = \| (\boldsymbol{\lambda}^{(t)} + \xi_k \mathbf{h}^{(t)})_+ - \boldsymbol{\lambda}^* \|_2^2 \\
   & \leq \| \boldsymbol{\lambda}^{(t)} + \xi_k \mathbf{h}^{(t)} - \boldsymbol{\lambda}^* \|_2^2 \label{eq:pff:prj} \\
   & = \| \boldsymbol{\lambda}^{(t)} - \boldsymbol{\lambda}^* \|_2^2 + 2 \xi_k \mathbf{h}^{(t)T} (\boldsymbol{\lambda}^{(t)} - \boldsymbol{\lambda}^*) + \xi_k^2 \| \mathbf{h}^{(t)} \|_2^2 \\
   & \leq \| \boldsymbol{\lambda}^{(t)} - \boldsymbol{\lambda}^* \|_2^2 + 2 \xi_k (d(\boldsymbol{\lambda}^{(t)}) - d(\boldsymbol{\lambda}^*)) + \xi_k^2 \| \mathbf{h}^{(t)} \|_2^2 \label{eq:pff:23}
 \end{align}
 Inequality (\ref{eq:pff:prj}) comes from the fact that projection of a point onto $\mathbb{R}_+^{|O||V|^2}$ makes it closer to the optimal point in $\mathbb{R}_+^{|O||V|^2}$. Apply inequality~(\ref{eq:pff:23}) recursively, we have
 \begin{align*}
   & \| \boldsymbol{\lambda}^{(t+1)} - \boldsymbol{\lambda}^* \|_2^2 \leq \\
   & \| \boldsymbol{\lambda}^{(1)} - \boldsymbol{\lambda}^* \|_2^2 + 2 \sum_{i=1}^{k} \xi_i (d(\boldsymbol{\lambda}^{(i)}) - d(\boldsymbol{\lambda}^*)) + \sum_{i=1}^{k} \xi_i^2 \| \mathbf{h}^{(i)} \|_2^2
 \end{align*}
 Because $ \| \boldsymbol{\lambda}^{(t+1)} - \boldsymbol{\lambda}^* \|_2^2 \geq 0$ and $\sum_{i=1}^{k} \xi_i > 0$, and let $d(\boldsymbol{\lambda}^{\circ}) = \max_{0 \leq i < k} {d(\boldsymbol{\lambda}^{(i)})}$, then
 \begin{align*}
   & 2 \sum_{i=1}^{k} \xi_i (d(\boldsymbol{\lambda}^*) - d(\boldsymbol{\lambda}^{\circ})) \leq \| \boldsymbol{\lambda}^{(1)} - \boldsymbol{\lambda}^* \|_2^2 + \sum_{i=1}^{k} \xi_i^2 \| \mathbf{h}^{(i)} \|_2^2 \\
   & \quad \quad \Longrightarrow d(\boldsymbol{\lambda}^*) - d(\boldsymbol{\lambda}^{\circ}) \leq \frac{\| \boldsymbol{\lambda}^{(1)} - \boldsymbol{\lambda}^* \|_2^2 + \sum_{i=1}^{k} \xi_i^2 \| \mathbf{h}^{(i)} \|_2^2}{2 \sum_{i=1}^{k} \xi_i} \\
   & \quad \quad \Longrightarrow d(\boldsymbol{\lambda}^*) - d(\boldsymbol{\lambda}^{\circ}) \leq \frac{\| \boldsymbol{\lambda}^{(1)} - \boldsymbol{\lambda}^* \|_2^2 + K^2 \sum_{i=1}^{k} \xi_i^2}{2 \sum_{i=1}^{k} \xi_i}
 \end{align*}
 $d(\boldsymbol{\lambda}^*) - d(\boldsymbol{\lambda}^{\circ}) \rightarrow 0$ if we choose a diminishing step size which lets $\xi_i \rightarrow 0$ and $\sum_{1}^{\infty} \xi_i = \infty$, then $\frac{\sum_{1}^{\infty} \xi^2_i}{\sum_{1}^{\infty} \xi_i} = 0$. (e.g. we can let $\xi_{i} = \frac{\xi_0}{i}$, then $\sum_{1}^{\infty} \xi_i = \infty$ and $\sum_{1}^{\infty} \xi_i^2 = \frac{\pi^2}{6}$.) Since the duality gap is zero, eventually the primal problem will converge to its optimum when its dual problem converges.
 \end{proof}


%

\begin{IEEEbiography}
[{\includegraphics[width=1in,height=1.25in,clip,keepaspectratio]{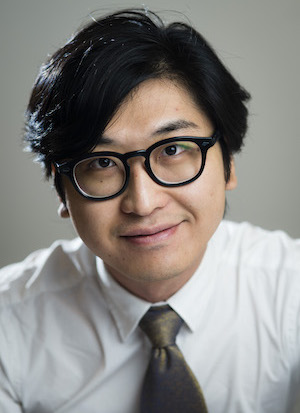}}]
{Liang Wang}
is a research associate in the Computer Laboratory at University of Cambridge, United Kingdom. In 2003, he received his BEng in Computer Science and Mathematics from Tongji University, Shanghai, China. Later, he received both his MSc and PhD degrees in Computer Science from University of Helsinki, Finland in 2011 and 2015 respectively. Liang's research interests include system and network optimisation, modelling and analysis of complex networks, information-centric networks, and distributed data processing, and so on. 
\end{IEEEbiography}

\begin{IEEEbiography}
[{\includegraphics[width=1in,height=1.25in,clip,keepaspectratio]{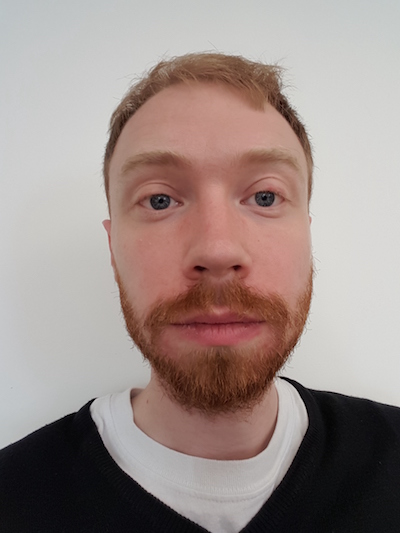}}]
{Gareth Tyson}
is a Lecturer at Queen Mary University of London. He receieved his PhD from Lancaster University, UK in 2010. His research centres on system measurements and design, looking at topics ranging from network operations to social media. He serves as a reviewer and program committee member for a number of prominent journals such as IEEE/ACM ToN, IEEE JSAC, IEEE TPDS, IEEE TNSM ACM TMM, IEEE TC. He recieved the Outstanding Reviewer Award at ICWSM'16.
\end{IEEEbiography}

\begin{IEEEbiography}
[{\includegraphics[width=1in,height=1.25in,clip,keepaspectratio]{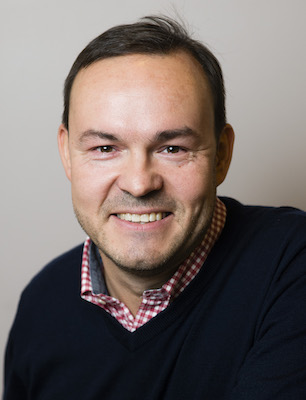}}]
{Jussi Kangasharju}
received his MSc from Helsinki University of Technology in 1998. He received his Diplome d'Etudes Approfondies (DEA) from the Ecole Superieure des Sciences Informatiques (ESSI) in Sophia Antipolis in 1998. In 2002 he received his PhD from University of Nice Sophia Antipolis/Institut Eurecom. In 2002 he joined Darmstadt University of Technology (TUD), first as post-doctoral researcher, and from 2004 onwards as assistant professor. Since June 2007 Jussi is a professor of computer science at University of Helsinki. 
\end{IEEEbiography}

\begin{IEEEbiography}
[{\includegraphics[width=1in,height=1.25in,clip,keepaspectratio]{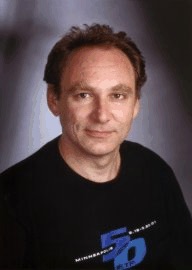}}]
{Jon Crowcroft}
(SM’95–F’04) received the B.S. degree in physics from  University of Cambridge in 1979, and the MSc degree in computing and PhD degree from UCL in 1981 and 1993, respectively.
He has been a Professor at the University of Cambridge since 2001. He has worked in the area of Internet support for multimedia communications for over 30 years. Prof.\ Crowcroft is a Fellow of the Royal Society, the Association for Computing Machinery (ACM), the British Computer Society, the IET, and the Royal Academy of Engineering.
\end{IEEEbiography}

\end{document}